\documentclass[a4paper,UKenglish,cleveref, autoref, thm-restate]{oasics-v2021}


\usepackage[ruled]{algorithm}
\usepackage[noend]{algpseudocode}
\usepackage{amsthm, multicol}

\usepackage{amsmath}
\usepackage{graphicx}
\usepackage{xspace}
\usepackage{wrapfig, caption}

\floatname{algorithm}{{\small Code}}
\usepackage{enumitem}
\newcommand{\pr}{p}

\algblockdefx[Receive]{Receive}{EndReceive}%
[1]{{\bf receive}   (#1) from process $\pr$}%
{{\bf end receive}}

\algblockdefx[Receivex]{Receivex}{EndReceivex}%
[2]{{\bf receive}   (#1) from process #2}%
{{\bf end receive}}

\algblockdefx[Upon]{Upon}{EndUpon}%
[1]{{\bf upon} (#1) do}%
{{\bf end upon}}

\makeatletter
\ifthenelse{\equal{\ALG@noend}{t}}%
{\algtext*{EndReceive}}
{}%

\makeatletter
\ifthenelse{\equal{\ALG@noend}{t}}%
{\algtext*{EndUpon}}
{}%

\algrenewcommand{\ALG@beginalgorithmic}{\small}
\algrenewcommand\alglinenumber[1]{\small #1:}

\newcommand{\ar}[1]{\textcolor{cyan}{#1}}
\newcommand{\af}[1]{\textcolor{olive}{#1}}
\newcommand{\nn}[1]{\textcolor{blue}{#1}}

\newcommand{\cg}[1]{\textcolor{purple}{#1}}


\newcommand{\prf}[1]{{}}




\setlength{\unitlength}{3.4pt}


\newtheorem{Def}{Definition}[section]

\newcommand{\sacode}[5]
{ \vspace{.06in} \hrule \vspace{.06in} 
 \noindent {\bf #1}: \\
 \footnotesize \noindent {\bf Signature:}\B \nobreak
 \normalsize \begin{quote} \nobreak #2 \end{quote}
 \footnotesize \noindent {\bf States:}\B \nobreak
 \begin{quote} \nobreak #3 \end{quote}
 \noindent {\bf Transitions:} \nobreak
 \vspace{-.2in} \nobreak
 \normalsize #4
 \vspace{-.06in} \hrule \vspace{.06in} 
}

\newcommand{\act}[1]{%
    \relax\ifmmode
        \mathord{\mathcode`\-="702D\sf #1\mathcode`\-="2200}%
    \else
        $\mathord{\mathcode`\-="702D\sf #1\mathcode`\-="2200}$%
    \fi
}

\newcommand{\tup}[1]{%
    \relax\ifmmode
      \langle #1 \rangle%
    \else
        $\langle$#1$\rangle$%
    \fi
}

\newcommand{\seq}[1]{%
    \relax\ifmmode
      \langle \! \langle #1 \rangle \! \rangle%
    \else
        $\langle \! \langle$ #1 $\rangle \! \rangle$%
    \fi
}

\newcommand{\B}{\vspace*{-\smallskipamount}}

\newcommand{\FF}{\vspace*{\medskipamount}}




\newcommand{\ms}[1]{%
    \relax\ifmmode
        \mathord{\mathcode`\-="702D\it #1\mathcode`\-="2200}%
    \else
        {\it #1}%
    \fi
}

\newcommand{\lit}[1]{%
    \relax\ifmmode
        \mathord{\mathcode`\-="702D\sf #1\mathcode`\-="2200}%
    \else
        {\it #1}%
    \fi
}

\newcommand{\XDK}[1]{}
\newcommand{\remove}[1]{} 
\newcommand{\proofremove}[1]{} 
\newcommand{\uselater}[1]{} 
















\makeatletter
\def\mainlistofsymbols{
  \normalsize
  \vspace*{1.5 em}
  \@starttoc{los}
}

\def\partonelistofsymbols{
  \normalsize
  \vspace*{1.5 em}
  \@starttoc{p1los}
}

\def\parttwolistofsymbols{
  \normalsize
  \vspace*{1.5 em}
  \@starttoc{p2los}
}

\def\l@symbol#1#2{\addpenalty{-\@highpenalty} \vskip 4pt plus 2pt
{\@dottedtocline{0}{0em}{8em}{#1}{#2}}}
\makeatother




\newcommand{\newhiddensym}[2]{%
}



\newcommand{\algIOA}[2]{\ifmmode{\text{#1}_{#2}}\else{$\text{#1}_{#2}$}\fi}



\newcommand{\EX}{\ifmmode{\xi}\else{$\xi$}\fi}
\newcommand{\EXF}{\ifmmode{\phi}\else{$\phi$}\fi}





















\newcommand{\inter}[1]{
	\ifmmode{\left(\bigcap_{\mathcal{Q}\in#1}\mathcal{Q}\right)}
	\else{$\left(\bigcap_{\mathcal{Q}\in#1}\mathcal{Q}\right)$}
	\fi
}

\newcommand{\idSet}{\mathcal{I}}
























\mathchardef\mhyphen="2D




\renewcommand{\pr}{p}





\newcommand{\vid}[1]{\ifmmode{\nu_{#1}}\else{$\nu_{#1}$}\fi}




\newcommand{\seen}{\ifmmode{seen}\else{$seen$}\fi}































\newcommand{\maxts}[1]{\ifmmode{maxTS_{#1}}\else{$maxTS_{#1}$}\fi}
\newcommand{\maxtag}[1]{\ifmmode{maxTag_{#1}}\else{$maxTag_{#1}$}\fi}
\newcommand{\maxpair}[1]{\ifmmode{maxMPair_{#1}}\else{$maxMPair_{#1}$}\fi}
\newcommand{\mintag}[1]{\ifmmode{minTag_{#1}}\else{$minTag_{#1}$}\fi}
\newcommand{\maxps}{\ifmmode{maxPS}\else{$maxPS$}\fi}
\newcommand{\conftg}[1]{\ifmmode{confirmed_{#1}}\else{$confirmed_{#1}$}\fi}
\newcommand{\maxconftag}{\ifmmode{\ms{maxCT}}\else{$maxCT$}\fi}
\newcommand{\rread}{\act{read}\xspace}
\newcommand{\wwrite}{\act{write}\xspace}

\newcommand{\NACK}{\mathit{NACK}\xspace}
\newcommand{\ACK}{\mathit{ACK}\xspace}
\newcommand{\linearizable}{linearizable\xspace}
\newcommand{\regular}{regular\xspace}
\newcommand{\local}{local\xspace}
\newcommand{\order}{totally-ordered\xspace}
\newcommand{\valid}{\act{valid}\xspace}
\newcommand{\execute}{\act{execute}\xspace}
\newcommand{\apply}{\act{apply}\xspace}
\newcommand{\timestamp}{\mathit{ts}\xspace}

\newcommand{\pprec}{\prec\!\!\!\prec\xspace}

\newcommand{\ApplyIfValid}{\act{LoggedApply}}

\newcommand{\ibalance}{\mathit{ibalance}\xspace}
\newcommand{\transfer}{\act{transfer}\xspace}
\newcommand{\balancex}{\act{read}\xspace}

\title{Validated Objects: Specification, Implementation, and Applications}


\author{Antonio {Fernández Anta}}{IMDEA Networks Institute, Spain}{antonio.fernandez@imdea.org}{https://orcid.org/0000-0001-6501-2377}{Partially supported by Spanish State Research Agency (AEI) ECID project PID2019-109805RB-I00, cofunded by FEDER.}
\author{Chryssis Georgiou}{University of Cyprus, Cyprus}{chryssis@ucy.ac.cy}{https://orcid.org/0000-0003-4360-0260}{}
\author{Nicolas Nicolaou}{Algolysis Ltd, Cyprus}{nicolas@algolysis.com}{https://orcid.org/0000-0001-7540-784X}{}
\author{Antonio Russo}{IMDEA Networks Institute, Spain}{antonio@antoniorusso.me}{https://orcid.org/0000-0003-3795-3000}{}

\authorrunning{Fernández Anta et al.}

\begin{CCSXML}
<ccs2012>
<concept>
<concept_id>10010147.10010919.10010172</concept_id>
<concept_desc>Computing methodologies~Distributed algorithms</concept_desc>
<concept_significance>500</concept_significance>
</concept>
</ccs2012>
\end{CCSXML}

\ccsdesc[500]{Computing methodologies~Distributed algorithms}

\keywords{Validation, Concurrent objects, Fault-tolerance, Distributed computing. }

\hideOASIcs
\nolinenumbers

\begin{document}

\maketitle

\begin{abstract}
Guaranteeing the validity of concurrent operations
on distributed objects
is a key property for ensuring 
reliability and consistency 
in distributed systems. 
Usually, the  methods for validating these operations, if present, are wired in the  object implementation. 
In this work,  
we formalize the notion of a {\em validated object}, decoupling the object operations and properties from the validation procedure. We consider two types of objects, satisfying different levels of consistency: the validated {\em \order} object, offering a total ordering of its operations, and its weaker variant, the validated {\em \regular} object. We provide conditions under which it is possible to implement these objects.
In particular, we show that crash-tolerant implementations of validated \regular objects are always possible in an asynchronous system with a majority of correct processes. However, for validated \order objects, consensus is always required if a property of the object we introduce in this work, {\em persistent validity,} does not hold.
Persistent validity combined with another new property, {\em persistent execution}, allows consensus-free crash-tolerant 
implementations of validated \order objects.
We demonstrate the utility of validated objects by considering several applications conforming to our formalism. 
\end{abstract}

\section{Introduction}

\noindent{\bf Motivation.} 
In distributed computing research, there is an important line of work on the formalization and implementation of distributed concurrent objects. A fundamental challenge of these implementations is making sure the operations that modify the state of an object never drive it into an incorrect or inconsistent state. 
In most proposals, the operations (and their arguments) invoked on the object have been assumed to be always valid, or
ensuring this validity
has been delegated to the application layer.
With the popularization of 
public data structures (due to the wide usage and vast application scope of distributed ledger technologies), 
there is a growing interest on
algorithms and objects capable of tolerating 
non-compliant
user behavior.
In this context, the implementation of an object cannot assume anymore that the operations invoked in the object will be well formed and respect any specification rule. Hence, the implementation of the object must be cautious, and validate operations before applying them. The direct way to do this is to introduce validation tests into the code that implements the object, so that an invalid operation execution is interrupted before it damages the object's state.


In this paper we explore the possibility of separating an object's implementation from the validation of the operations invoked in the object, and the implications of this separation.
This approach is inspired by {\em aspect-oriented programming}~\cite{AOP}, which aims in modular programming by separating cross-cutting concerns, i.e., cohesive areas of functionality. The idea is to add specific checks (advices as called) without changing the code of a program (object in our case).
Our work is meant to be 
a first step on understanding 
how the application requirements and properties impact the algorithms that implement a distributed object through the introduction of a validation predicate $\valid()$ that wraps the application logic of the object.\smallskip

\noindent{\bf Our approach and contributions.} We employ a modular approach in which the characteristics and methods to validate the operations of an object are not ``wired'' in the object implementation. In particular, given a concurrent object $O$ and its supported set of operations $OP$, we recast this object as a {\em validated} object via an $\apply()$ function. This function includes a validation filter, so that a specific operation $op\in OP$ is validated before it is executed,
in accordance to an associated validation predicate $\valid()$. Different validation predicates can be enforced via the $\apply()$ function without affecting the core code of the object. 

\begin{wrapfigure}{R}{0.3\textwidth}
    \begin{minipage}{0.3\textwidth}
    \vspace{-25pt}
\begin{algorithm}[H]
\captionsetup{name=Code}
\small
\begin{algorithmic}[1]
    \State $val \leftarrow \bot$
    \Function{\rread}{~}
        \State \textbf{return} $(val)$
    \EndFunction
    \Function{\wwrite}{$v$}
    \If{$v > 0$} \label{line:test-pos}
        \State $val \leftarrow v$
        \State 
        \textbf{return} $(\ACK)$
    \Else ~
        \textbf{return} $(\NACK)$
    \EndIf
    \EndFunction
\end{algorithmic}
\caption{Implementation of a positive R/W register $O$.}
\label{alg:record-sample-motivation}
\end{algorithm}
   \end{minipage}
  \end{wrapfigure}

Consider the following example. Let $O$ be a simple R/W register supported by two operations, $\rread()$, which returns the value of the register, and $\wwrite(v)$, which changes the value of the register into $v$. Say that 
we would like to impose that only positive numbers are written on the register. One approach would be to include a test directly in the code of the write function (see Code~\ref{alg:record-sample-motivation}). However, should a different or an additional rule be needed, the code would have to be changed again, possibly jeopardizing the implementation correctness (especially in the the case of complex objects). 

\begin{wrapfigure}{R}{0.42\textwidth}
    \begin{minipage}{0.42\textwidth}
    \vspace{-25pt}
\begin{algorithm}[H]
\captionsetup{name=Code}
\small
\begin{algorithmic}[1]
    \Function{\valid}{$op$}
        \State \textbf{return} $(op=\rread() \lor$ 
        \State \hspace{3.5em} $(op=\wwrite(v) \land v>0))$ \label{line:valid-record}
    \EndFunction
    \Function{\execute}{$op$}
    \If{$op=\wwrite(v)$}
        \State $val \leftarrow v$;
        \textbf{return} $(\bot)$
    \Else ~
        \textbf{return} $(val)$
    \EndIf
    \EndFunction
\end{algorithmic}
\caption{Functions \valid and \execute for a positive R/W register $O$.}
\label{alg:record-functions}
\end{algorithm}
   \end{minipage}
  \end{wrapfigure}

With our approach, we separate the test from the code implementing  the object. Processes invoke the desired operation via an $\apply()$ function. The process passes to $\apply()$ the same parameters as it would do in the ``normal'' case, and the apply function invokes a $\valid()$ predicate that has incorporated the desired validation test (i.e., in the case of a write operation, that $v$ is positive, see Code~\ref{alg:record-functions}). In case it is true, it then invokes $\execute()$, which applies the operation on the object (i.e., it sets $v$ as the value of the register). In case the validation fails (e.g., a negative value was intended to be written), $\apply()$ will return a NACK, signaling the violation of the imposed restriction (see Code~\ref{alg:apply-centralized}). Should we require a different validation (e.g., we want a Boolean register), we would only replace the test in the $\valid()$ predicate (e.g., $v >0$ in Line~\ref{line:valid-record} becomes $v \in \{True, False\}$), without making any change on the object's implementation (i.e., in function $\execute()$).

A particular challenge of our approach is to implement the validated version of a given object on a decentralized setting while guaranteeing certain level of consistency. In this work, we consider two types of validated objects, each providing a different level of consistency, the validated {\em \regular } object and the {\em \order} one.
Intuitively, a \regular object provides consistency guarantees similar to a regular register \cite{Lamport86}, while the \order property is similar to linearizability \cite{HW90}.
We are now ready to summarize the contributions of this work. 
\begin{itemize}[leftmargin=3mm]
    \item We introduce the formalization for a generic validated object $O$, along with the two mentioned consistency types, on which the application-specific operations are called (Section~\ref{sec:validated-objects}).
    \item We provide an algorithm to implement validated \regular objects in crash-prone asynchronous distributed systems (Section~\ref{sec:implementing-regular-objects}).
    \item We provide an algorithm to implement validated \order objects under crash or Byzantine failures using the corresponding version of consensus~\cite{Lynch1996, attiya2004distributed} (Section~\ref{sec:consensus}).
    \item Then in Section~\ref{sec:persistent-validity}, we define a property of a validity predicate, which we call {\em persistent validity}, and in Section~\ref{sec:negative} we show that validated \order objects without persistent validity can be used to solve consensus.
    \item In Section~\ref{sec:positivePV}, we introduce an additional property, that we call {\em persistent execution, 
    which allows} a validated \order object to be implemented without consensus. 
    \item Finally, in Section~\ref{sec:applications}, we present some applications (such as a punching system and a cryptocurrency) that conform to the formalism we provided, demonstrating its usability.
    \smallskip
\end{itemize}

\remove{
Then we proceed with the implementation of the validated object according to two consistency  on which we focused, the regular object in Section~\ref{sec:implementing-regular-objects} and the \order object in Section~\ref{sec:implementing-order-objects}. In particular, Section~\ref{sec:persistent-validity} characterize the persistent property of a validity predicate and Section~\ref{sec:negative} shows the implications of the absence of such a property in a generic system. Section~\ref{sec:positivePV} provide the conditions that allow an \order object to be implemented without consensus. Finally Section~\ref{sec:applications} list some applications that conform to the formalism we provided.
}

\noindent{\bf Related work.}
\label{sec:related-works}
The impact of a validation function has been already treated by previous work, according, however, to {\em specific} use cases.

In \cite{frey2021consensus}, a validity property called {\em forward acceptability} is defined, which
enables the operations of a generic application to be commutative. This work only considers eventually consistent objects with this property. It provides an algorithm for a specific case, a PC-Ledger, that is implemented in a consensus-free system. On our side, we have a wider focus, including \textit{any} object, characterizing its validity function and going in detail with the different consistency properties we are able to guarantee.

In~\cite{CachinKPS01}, the authors introduce and solve the notion of Validated Byzantine Agreement to ensure that the decided value is one proposed by a non-faulty process. To do so, they enhance the system with an external validity condition, which requires that the agreement value is valid according to a global, polynomial-time computable predicate, known to all processes and it is application-determined. Hence, each process proposes a value that should satisfy this predicate. Such an external validity condition could be implemented using our approach via an appropriate $\apply()$ function and $\valid()$ predicate (which would implement the required predicate).

In blockchain systems, records are usually validated after the total order  is globally agreed, validating and executing transaction according to the agreed sequence. As an example, Ethereum~\cite{wood2014ethereum} first constructs a block, and then network nodes sequentially run the Ethereum Virtual Machine on each transaction to validate it, and update the global state if valid. This brings to the acceptance and inclusion in the block of invalid transaction inside the global order, in order to gain time in the consensus challenge of the system. A mitigation to this problem is brought by \cite{DBLP:conf/sp/CrainNG21}, that is build on top of \cite{DBLP:conf/nca/CrainGLR18}, where validation is run by a subset of nodes before the proposal is broadcast to the whole network, in order to not overload nodes. In our work we abstract and generalize these behaviors, mapping them to validated objects with different consistency criteria.

In \cite{DBLP:journals/sigact/AntaKGN18}, the authors define a Validated Distributed Ledger Object. The validation is only taken into account in respect to Ledger Objects limiting the scope to that particular kind of data structure. Furthermore, the authors do not investigate or characterize the properties required by validation; they only assume the existence of a validation predicate. 

\section{Validated Objects}
\label{sec:validated-objects}

\subsection{Concurrent Objects and Histories}
\label{sec:object-definition}
We recall the general definition of object formalized in \cite{DBLP:journals/sigact/AntaKGN18} where an object type T specifies $(i)$ the set of values (or states) that any object O of type T can take, and $(ii)$ the set of operations that a process can use to modify or access the value of O. An object O of type T is a concurrent object if it is a shared object accessed by multiple processes \cite{DBLP:books/daglib/0030596,attiya2004distributed}. Each operation on an object O consists of an invocation event and a response event, that must occur in this order. 
A history of operations on O, denoted by $H_O$, is a sequence of invocation and response events, starting with an invocation event. (The sequence order of a history reflects the real time ordering of the events.) An operation $\pi$ is complete in a history $H_O$, if $H_O$ contains both the invocation and the matching response of $\pi$, in this order. A history $H_O$ is complete if it contains only complete operations; otherwise it is partial~\cite{DBLP:books/daglib/0030596}. As in~\cite{DBLP:books/daglib/0030596}, we convert a partial history to a complete one by, for each incomplete operation $\pi$, either removing the invocation of $\pi$ or completing $\pi$ with a response event. From this point onward, we consider only complete histories.
An operation $\pi_1$ precedes an operation $\pi_2$ (or $\pi_2$ succeeds $\pi_1$), denoted by $\pi_1\rightarrow\pi_2$, in $H_O$, if the response event of $\pi_1$ appears before the invocation event of $\pi_2$ in $H_O$. Two operations are concurrent if none precedes the other.
A \emph{run} $R$ of a distributed system that implements object $O$ generates a (potentially infinite) history $H_O$. 

\begin{wrapfigure}{R}{0.4\textwidth}
    \begin{minipage}{0.4\textwidth}
    \vspace{-25pt}
\begin{algorithm}[H]
\captionsetup{name=Code}
\small
\begin{algorithmic}[1]
    \State $S \leftarrow \emptyset$ is the state of the object
    \Function{\apply}{$op, i$}
    \If{$\valid(S, op, i)$}
        \State $r \leftarrow \execute(S,op,i)$
        \State $S \leftarrow S || op$ 
        \State \textbf{return} $(\ACK, r)$
    \Else ~
        \textbf{return} $(\NACK, -)$
    \EndIf
    \EndFunction
\end{algorithmic}
\caption{Centralized implementation of the $\apply$ function for a validated object $O$. Code executed by the central server. Function $\execute(S,op,i)$ provides the result of operation $op$ by process $i$ in state $S$. The operator $||$ combines the new operation with the previous valid executed operations.}
\label{alg:apply-centralized}
\end{algorithm}
   \end{minipage}\vspace{-.3em}
  \end{wrapfigure}

\subsection{Validated Object Types}

In this work we consider validated objects. These are concurrent objects in which the operations executed and how they are interleaved are filtered with a predicate $\valid()$. This predicate has as argument the state of the object $S$ and a new operation $op$ issued by process $i$, and it determines whether $op$ is valid in the light of $S$. The state $S$ is given by an ordered set of operations that have been executed in the object (and are valid). (The operations in $S$ could be concurrent with $op$ but have been ``applied'' in the object before it.) A second function $\execute()$ has the same arguments as $\valid()$ and, if the operation $op$ is valid, is used to obtain the value that $op$ returns.
{We will use the term {\em operation} and the symbol $op$ for custom object logic unknown to our formalism, and we refer to {\em functions} when referring to the primitives of the objects we define, e.g., $\valid$ or $\execute$.}
\remove{
\nn{[NN: that is when executed "after" or concurrently with the operations in $S$?]}\cg{[CG: Since (i) execute is run after valid, and  (ii) S is defined as ``operations that have been executed in the object'', I believe it follows that op is executed after the operations in S, right?]}\af{[In fact the operations in $S$ can be concurrent with $op$, but "take effect" before it.]}
}

In order to use a validated object, a client $i$ 
invokes a function $\apply(op,i)$, which 
checks whether the operation $op$ invoked by $i$ is valid, and if so
it applies it in the object {by executing $op$}. If the operation is not valid, the call $\apply(op,i)$ returns $(\NACK, -)$. If the operation is valid,
$\apply(op,i)$ returns $(\ACK, r)$, where $r$ is the value that $op$ returns. Code~\ref{alg:apply-centralized} shows a
centralized implementation of the function $\apply()$, executed at a single central server. This code is provided for illustration purposes. 
The operator $||$ that defines how the operations are combined into the state $S$ is not detailed on purpose.

\noindent
\fbox{\begin{minipage}{\textwidth} Observe that the history $H_O$ of a run $R$ of a validated object $O$ contains {\em only} the operations $op$ that are found valid and are in fact executed. These are the operations for which $\apply(op,i)$ returns $(\ACK, r)$. We denote the set of complete operations in history $H_O$ generated in run $R$ by $C(R)$.
\end{minipage}}

In the following we assume that $\valid()$ and $\execute()$ have the following arguments:\vspace{-.3em}
\begin{itemize}[leftmargin=7mm]
    \item A strict partially ordered set of operations, given as a pair $\tup{P,\prec}$. $P$ is a set of operations and $\prec$ is a strict partial order defined in~$P$. {In the especial case in which $\prec$ is a total order, denoted as $\pprec$, the first argument can be provided as a
    sequence of operations.}
    \item The operation $op$ to be considered.
    \item The process $i$ that issued $op$.\vspace{-.3em}
\end{itemize}

In this work we consider two types of validated objects.\vspace{-.3em}

\begin{definition}
\label{def:regular-object}
    A validated object specified with functions $\valid()$ and $\execute()$ is a \emph{validated \regular object} if in every run $R$ 
    a partial order $\prec$ among the set $C(R)$ of complete operations can be defined, such that,\vspace{-.5em}
    \begin{enumerate}[leftmargin=5mm]
    \item $\forall op, op' \in C(R), op \rightarrow op' \implies op \prec op'$;
    \item $\forall op \in C(R)$, let $P(op)=\{op': op' \in C(R) \land op' \prec op\}$ and client $i$ the
    issuer of $op$. \sloppy{Then, \mbox{$\valid(\tup{P(op), \prec}, op, i)=True$} and 
    $op$ returns in its response event the value \mbox{$\execute(\tup{P(op), \prec}, op, i)$}.}
    \end{enumerate}
\end{definition}

The following is a 
stronger version  in which 
operations are totally ordered.\vspace{-.5em}

\remove{

\begin{definition}
\label{def:local-object}
    A validated object specified with functions $\valid()$ and $\execute()$ is a \emph{validated \local object} if in every run $R$ a partial order $\prec$ among the set $C(R)$ of complete operations can be defined, such that,\vspace{-1em}
    \begin{enumerate}
    \item $\forall op, op' \in C(R), op \rightarrow op’ \implies op \prec op’$;
    \item $\forall op \in C(R)$, let $P(op)=\{op': op' \in C(R) \land op' \prec op\}$\nn{[NN:do we mean $\prec_l$ for $P(op)$? if not then the definition for valid might not be correct.]} \cg{[CG: Not sure I see the problem here. In $P(op)$ we have the ordering based on $\prec$, but for $\valid()$ and $\execute()$ we require the local order; of course the obtained guarantees are weaker.]} and client $i$ the
    issuer of $op$. Then, $\valid(\tup{P(op), \prec_l}, op, i)=True$ and 
    $op$ returns in its response event the value $\execute(\tup{P(op), \prec_l}, op, i)$.
    \end{enumerate}
\end{definition}
}

\begin{definition}
\label{def:order-object}
A validated object specified with functions $\valid()$ and $\execute()$ is a \emph{validated \order object}\footnote{Note that if $\valid()$ is considered as the sequential specification of the object, then the \order property is a form of linearizability defined over the executed operations. However, to avoid confusion, we prefer to use a different name since we do not include in the object histories the operations that were rejected with $(\NACK, -)$ by 
$\apply()$.} if in every run $R$ a total order $\pprec$ among the set $C(R)$ of complete operations can be defined, such that,\vspace{-.7em}
    \begin{enumerate}[leftmargin=5mm]
    \item $\forall op, op' \in C(R), op \rightarrow op' \implies op \pprec op'$;
    \item $\forall op \in C(R)$, let $P(op)=\{op': op' \in C(R) \land op' 
    {\prec\!\!\!\prec} op\}$ and client $i$ the
    issuer of $op$. \sloppy{Then, \mbox{$\valid(\tup{P(op), \pprec}, op, i)=True$} and
    $op$ returns in its response event the value $\execute(\tup{P(op), {\prec\!\!\!\prec}}, op, i)$.}\vspace{-.5em}
    \end{enumerate}
\end{definition}

In a run $R$ of a validated \order object, the set $C(R)$ is totally ordered by $\pprec$. We will denote the resulting \emph{sequence} of all the operation of $R$ by $S(R)$.\vspace{-.5em}

\remove{

\begin{definition}
\label{def:linearizable-object}
A validated object specified with functions $\valid()$ and $\execute()$ is a \emph{validated \linearizable object} if in every run $R$ a total order $\pprec$ among the set $C(R)$ of complete operations can be defined, such that,
    \begin{enumerate}
    \item $\forall op, op' \in C(R), op \rightarrow op’ \implies op \pprec op’$;
    \item $\forall op \in C(R)$, let $P(op)=\{op': op' \in C(R) \land op' 
    {\prec\!\!\!\prec} op\}$ and client $i$ the
    issuer of $op$. \sloppy{Then, $\valid(\tup{P(op), \pprec_l}, op, i)=True$ and 
    $op$ returns in its response event the value $\execute(\tup{P(op), {\prec\!\!\!\prec}_l}, op, i)$.}
    \end{enumerate}
\end{definition}
}
\remove{
\cg{[CG: So, essentially the difference in items (2) (resp. (3)) of Definitions 3 and 4, is that we provide to valid (resp. execute) the set of operations that took place before $op$ according to the total order, in the case of Def. 4 without revealing the exact order (we only give the set, and the local partial order), as opposed to definition 3 that the total ordering is provided (it can be constructed by the set P and the total order provided). Is this correct?]}\\ \af{[Yes, that is correct.]}\cg{[CG: Perhaps we can add some explanation here.]}}

\remove{
\cg{[CG: Don't we need to define the PV property? My understanding is that the above definitions can now enable us to define PV for each of the above object types, using the corresponding order used in the definition (total order, partial order, local order). Essentially we need to restate Definitions 15-17 using the above terminology, correct?]} 
\af{[As we agreed we moved the definition of PV to the section on lineatizability, since it is not needed before.]}}

\section{Algorithms Implementing  Validated Regular Objects\vspace{-.5em}}
\label{sec:implementing-regular-objects}

We present  algorithms implementing validated \regular 
objects
in an asynchronous system.\vspace{-.6em}

\subsection{Model}

We assume a distributed
system composed of $n$ processes with {unique} identities from the set $\idSet=\{1, \ldots, n\}$. Processes are asynchronous and crash prone, i.e.,
they advance their execution at arbitrary speed and can stop permanently (i.e., crash) at any point {during their execution}.
Each process $i$ {has write access to}
a linearizable SWMR Distributed Ledger Object (DLO) \cite{DBLP:journals/sigact/AntaKGN18,DBLP:conf/edcc/CholviAGNR20} denoted $L_i$.
All processes can read all DLOs.
A DLO $L_i$ has a state, which is a totally ordered sequence $S$ of records, initially empty, and has two operations:\vspace{-.8em}
\begin{itemize}[leftmargin=5mm]
    \item $L_i.get()$, which returns the current state (sequence of records) $S$ of the DLO,
    \item $L_i.append(r)$, which adds record $r$ to the end of the sequence~$S$.
\end{itemize}
These DLOs are reliable in the sense that any invocation to these operations by a correct process eventually completes \cite{DBLP:journals/sigact/AntaKGN18}.
Reliable linearizable SWMR DLOs can be implemented in an unreliable asynchronous system. This follows from the work of Imbs et al.~\cite{DBLP:journals/jpdc/ImbsRRS16}, in which they implement SWMR atomic h-registers, which are registers that, when read, return the whole history of written values. Each of these h-registers trivially implements a SWMR DLO. Moreover, their implementation is on a distributed system with $n$ servers, out of which up to $f < n/3$ can be Byzantine. Hence, the implementation is Byzantine-tolerant with optimal resilience.
In this section we also observe that, with minor changes, the algorithm proposed by Imbs et al.~\cite{DBLP:journals/jpdc/ImbsRRS16}  implements SWMR atomic h-registers for $n$ crash-prone processes, out of which $f < n/2$ can fail. 

As usual, we assume the following well-formedness property: A process $i$ does not invoke a call to the function $\apply(op,i)$ of the object being implemented before the previous one has finished.\vspace{-.7em} 
%

\begin{wrapfigure}{R}{0.4\textwidth}
    \begin{minipage}{0.4\textwidth}
    \vspace{-25pt}
\begin{algorithm}[H]
\captionsetup{name=Code}
\small
\begin{algorithmic}[1]
    \Function{apply}{$op,i$}
    \For {$j=1$ to $n$} \label{loop-get}
      \State $G_j  \leftarrow L_j.get()$
      \State $T_j \leftarrow |G_j|$
    \EndFor
    \State $\timestamp \leftarrow (i, T_1, \ldots, T_i, \ldots, T_n)$
    \State $P \leftarrow \{op': \tup{\timestamp',op'} \in \bigcup_j G_j \}$ \label{p-def} 
    \If{$valid(\tup{P, \prec}, op, i)$} \label{p-valid}
        \State $res \leftarrow \execute(\tup{P, \prec}, op, i)$
        \State $L_i.append(\tup{\timestamp,op})$ \label{append-regular}
        \State \textbf{return} $(\ACK, res)$ \label{return-regular}
    \Else ~
        \textbf{return} $(\NACK, -)$
    \EndIf
    \EndFunction
\end{algorithmic}
\caption{Crash-tolerant implementation of the $\apply$ function for a validated \regular object $O$ that uses linearizable SWMR DLOs $L_j, j\in [1,n]$. The code is for process $i \in [1,n]$. }
\label{alg:generic-apply}
\end{algorithm}
    \end{minipage}\vspace{-2.1em}
  \end{wrapfigure}

\subsection{Crash-tolerant Algorithm for Validated Regular Objects}

Let us consider a validated \regular object $O$ specified by the functions $\valid()$ and $\execute()$. Code~\ref{alg:generic-apply} presents an implementation of the $\apply()$ function to be run by each of the processes of the distributed system in order to implement an instance of the object $O$.
For technical reasons, we assume that the invocation action of a valid operation $op$ issued by $i$ occurs when it enters the loop in Line~\ref{loop-get} and invokes $L_1.get()$, and that it completes when the $L_i.append(\tup{\timestamp,op})$ operation in Line~\ref{append-regular} completes. What happens before and after these two actions respectively in the execution of $\apply(op,i)$ is local to process $i$, and not visible outside the client. This assumption removes the uncertainty of whether an operation has been completed if the process crashes after Line~\ref{append-regular} but never executes Line~\ref{return-regular}.

We proceed by demonstrating that Code~\ref{alg:generic-apply} implements a validated \regular object as defined in Definition~\ref{def:regular-object}. 
Observe from Code~\ref{alg:generic-apply} that every operation $op$ issued by process $i$ is assigned a timestamp $\timestamp(op)=(i, T_1, \ldots, T_i, \ldots, T_n)$, which is appended as part of the record of $op$
in ledger $L_i$ if it is valid and completes. The value $T_j$ in the timestamp is the number of records found in ledger $L_j$ in the loop of Line~\ref{loop-get}.
%
%
%
%
These timestamps 
are used to define the partial order $\prec$ among completed operations.

\begin{definition}
\label{def:prec}
Given any two completed operations{} $op, op' \in C(R)$, with respective timestamps $\timestamp(op)=(i,T_1, \ldots, T_i, \ldots, T_n)$ and  $\timestamp(op')=(k,T'_1, \ldots, {T'_i}, \ldots, T'_n)$, $i,k\in [1,n]$,
then $op \prec op'$ if and only if {$T_i < T'_i$}. 
\end{definition}

We show now that $\prec$ is a strict partial order as required.

\begin{lemma}
\label{lemma:strict-order}
If $op \prec op'$ it cannot happen that $op' \prec op$. Hence, $\prec$ is a strict order.
\end{lemma}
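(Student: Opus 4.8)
The plan is to unwind the definition of $\prec$ (Definition~\ref{def:prec}) and reduce the claim to an assertion about the integer coordinates of the timestamps. Suppose, for contradiction, that both $op \prec op'$ and $op' \prec op$ hold. Let $op$ be issued by process $i$ with timestamp $\timestamp(op)=(i,T_1,\ldots,T_n)$, and let $op'$ be issued by process $k$ with timestamp $\timestamp(op')=(k,T'_1,\ldots,T'_n)$. By Definition~\ref{def:prec}, $op \prec op'$ means $T_i < T'_i$, and $op' \prec op$ means $T'_k < T_k$.

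The first thing I would observe is that the relation $\prec$ compares operations using the coordinate indexed by the issuer of the \emph{left-hand} operation. So $op \prec op'$ looks at coordinate $i$ (the issuer of $op$), while $op' \prec op$ looks at coordinate $k$ (the issuer of $op'$). If $i = k$, then both comparisons use coordinate $i$, and we would need $T_i < T'_i$ and $T'_i < T_i$ simultaneously, which is impossible for integers; that case is immediate. The substantive case is $i \neq k$, where the two inequalities involve different coordinates and a naive argument does not close.

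For the case $i\neq k$, the key step is to relate the $i$-th coordinate of each timestamp to the number of records ledger $L_i$ contains at the moment the corresponding operation reads it in the loop of Line~\ref{loop-get}. By the well-formedness assumption, process $i$ runs its $\apply$ calls sequentially, so each new valid completed operation by $i$ appends exactly one record to $L_i$ after the previous one; hence the $i$-th coordinate $T_i$ recorded by $op$ is strictly less than the length of $L_i$ after $op$'s own append, and in fact equals the position at which $op$'s record sits. For $op'$, since $op \prec op'$ gives $T_i < T'_i$, the ledger $L_i$ had strictly more records when $op'$ read it than when $op$ read it; because $L_i$ is a linearizable DLO whose state only grows by appends (records are never removed), this means $op'$'s $L_i.get()$ in Line~\ref{loop-get} linearized strictly after $op$'s append of its own record in Line~\ref{append-regular}, i.e., after $op$ completed. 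Symmetrically, $op' \prec op$ (so $T'_k < T_k$) forces $op$'s read of $L_k$ to linearize strictly after $op'$'s append, i.e., after $op'$ completed. Together these say $op$ completes before $op'$'s read-phase begins, while simultaneously $op'$ completes before $op$'s read-phase begins — a cyclic dependency in real time, since the read-phase of an operation precedes its completion. This contradiction establishes the lemma, and irreflexivity ($op \not\prec op$) follows since $T_i < T_i$ is false, so $\prec$ is a strict partial order.

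\textbf{Main obstacle.} The delicate point is the asymmetry in how $\prec$ is defined: it uses the issuer-coordinate of the left operand only, so antisymmetry is not a formal triviality about a single totally-ordered coordinate but genuinely requires tracking which ledger each inequality refers to and arguing, via the monotone (append-only) growth of the ledgers together with their linearizability, that the two inequalities encode mutually exclusive real-time orderings of the operations' read- and append-phases. Getting that real-time causality argument stated cleanly — in particular pinning down that $T_i$ for an operation issued by $i$ is exactly the index of its own record in $L_i$ — is where the care is needed.
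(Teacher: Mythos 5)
Your proof is correct and follows essentially the same route as the paper's: unwind Definition~\ref{def:prec}, use the SWMR append-only structure and well-formedness to pin down where each operation's own record sits in its issuer's ledger, and combine linearizability of $L_i$ and $L_k$ with the program order inside $\apply$ (all gets precede the append) to reach a real-time contradiction — the paper packages this as a one-directional chain ending in $T_k \le T'_k$ rather than your symmetric cycle, but the content is identical. One small caution: ``linearized after'' does not literally yield ``$op$ completed before $op'$'s read phase began'' (the ledger operations may overlap in real time), but your cycle still closes once restated in terms of linearization points, each of which lies within its operation's interval, so this is a phrasing issue rather than a gap.
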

 \begin{proof}
 Assume for contradiction that $op \prec op'$ and $op' \prec op$. 
Let $\timestamp(op)=(i,T_1, \ldots, T_n)$ and $\timestamp(op')=(k,T'_1, \ldots, T'_n)$. 
 Then $\apply(op,i)$ finds $T_i$ records in ledger $L_i$ and $T_k$ records in ledger $L_k$, while $\apply(op',k)$ finds $T'_i$ records in $L_i$ and $T'_k$ records in ledger $L_k$. 
 By assumption, we have that $T_i < T'_i$ and $T_k > T'_k$. 
 From $T_i < T'_i$ and the linearizability of $L_i$, the append operation in $\apply(op,i)$ 
 precedes or is concurrent with the $L_i.get()$ operation in $\apply(op',k)$. Hence, $i$ executed $L_k.get()$ before $k$ invoked $L_k.append(\tup{\timestamp(op'),op'})$.
 By the linearizability of $L_k$, it is not possible that $T_k > T'_k$, and we have a contradiction.
 \end{proof}

\begin{lemma}
\label{lemma:implied-order}
$op \rightarrow op' \implies op \prec op'$.
\end{lemma}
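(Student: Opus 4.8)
The plan is to unwind the definitions of the real-time precedence $\rightarrow$, the timestamp assignment in Code~\ref{alg:generic-apply}, and the order $\prec$ from Definition~\ref{def:prec}. Suppose $op \rightarrow op'$, where $op$ is issued by process $i$ and $op'$ by process $k$, with timestamps $\timestamp(op)=(i,T_1,\ldots,T_n)$ and $\timestamp(op')=(k,T'_1,\ldots,T'_n)$. By Definition~\ref{def:prec}, I need to show $T_i < T'_i$. The key is the convention fixed just before Definition~\ref{def:prec}: the invocation of $op$ is the moment $\apply(op,i)$ enters the loop at Line~\ref{loop-get} (calling $L_1.get()$), and its response is the completion of $L_i.append(\tup{\timestamp(op),op})$ at Line~\ref{append-regular}. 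Since $op \rightarrow op'$, the $L_i.append(\tup{\timestamp(op),op})$ operation completes before $\apply(op',k)$ begins its loop, hence in particular before $\apply(op',k)$ invokes $L_i.get()$.

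Now I invoke the linearizability of $L_i$. The append of $op$'s record to $L_i$ completes before $op'$'s $L_i.get()$ is invoked, so in any linearization the append of $\tup{\timestamp(op),op}$ precedes that $L_i.get()$. Therefore the sequence returned by $op'$'s $L_i.get()$ contains $op$'s record (and every record that was in $L_i$ when $op$ appended). Since $T_i$ is, by construction in Line~\ref{loop-get}, the length $|G_i|$ of $L_i$ as seen by $op$ at the time it read $L_i$ — equivalently, after $op$'s own append, $L_i$ has length $T_i+1$ with $op$'s record in position $T_i+1$ — the append of $op$ increases the length of $L_i$ to at least $T_i+1$, and $L_i$ only grows (records are appended to the end and never removed). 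Hence when $op'$ reads $L_i$ it sees at least $T_i+1$ records, i.e. $T'_i = |G'_i| \geq T_i + 1 > T_i$. By Definition~\ref{def:prec}, $op \prec op'$.

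One point to handle carefully: I should confirm that $T_i$ really is the index such that $op$'s record sits at position $T_i+1$ in $L_i$ after $op$'s append. This follows because $T_i$ is set to $|G_i|$ where $G_i = L_i.get()$ is read inside the same loop, and the well-formedness assumption guarantees process $i$ has no other concurrent $\apply$ call, so between $i$ reading $L_i$ and $i$ appending, no other record is inserted by $i$ into $L_i$ (and $L_i$ is SWMR, so only $i$ writes to it). Thus $L_i$ has exactly $T_i$ records when read and exactly $T_i+1$ after the append, with $op$'s record last. The main obstacle is nothing deep — it is just making the interplay between the "invocation/response" convention, the SWMR single-writer property, and linearizability of $L_i$ fully precise so that the inequality $T'_i \geq T_i + 1$ is rigorously justified rather than merely plausible; the rest is bookkeeping.
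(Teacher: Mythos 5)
Your proof is correct and follows essentially the same route as the paper's: use the invocation/response convention to conclude that $op$'s $L_i.append$ completes before $op'$'s $L_i.get()$, then apply linearizability of $L_i$ to get $T'_i \geq T_i+1$ and conclude via Definition~\ref{def:prec}. Your extra care in justifying that $op$'s record sits at position $T_i+1$ (via the SWMR property and well-formedness) is a reasonable tightening of a step the paper leaves implicit, but it is not a different argument.
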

\begin{proof}
 Let us assume $op$ was issued by process $i$ and $op'$ was issued by process $k$. Let $\timestamp(op)=(i, T_1, \ldots, T_i, \ldots, T_n)$. From $op \rightarrow op'$, the response action of $op$ happened before the invocation action of $op'$. So, the execution of the append operation $L_i.append(\tup{\timestamp,op})$ in the call $\apply(op,i)$ was completed before the $L_i.get()$ call in  $\apply(op',k)$.
 Then, because of the linearizability of the ledgers, the length of ledger $L_i$ found in $\apply(op',k)$ is $T'_i \geq T_i+1$ (since the append operation increased its length). 
 Hence, $op \prec op'$ from Definition~\ref{def:prec}. \end{proof}

The proof of the next lemma is given in Appendix~\ref{sec:app:proofs}.

\begin{lemma}
\label{lemma:complete-validity}
\sloppy{For each complete operation $op$ (issued by $i$), $\valid(\tup{P(op), \prec}, op, i)=True$. Moreover, $op$ returns in its response event the value $\execute(\tup{P(op), \prec}, op, i)$.}
\end{lemma}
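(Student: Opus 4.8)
The plan is to show that the partial order $P(op)$ supplied to $\valid()$ by the definition (Definition~\ref{def:regular-object}, item 2) coincides exactly with the set $P$ computed inside $\apply(op,i)$ in Line~\ref{p-def} of Code~\ref{alg:generic-apply}. Once this identity of the two argument sets is established, the conclusion is immediate: the code only returns $(\ACK, res)$ after checking $\valid(\tup{P,\prec},op,i)$ in Line~\ref{p-valid}, so a complete operation (which by our invocation/response convention is one that completed the $L_i.append$ in Line~\ref{append-regular}) must have passed this test, giving $\valid(\tup{P(op),\prec},op,i)=True$; and the returned value $res$ is set to $\execute(\tup{P,\prec},op,i)$, which equals $\execute(\tup{P(op),\prec},op,i)$ by the same set identity.

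So the heart of the argument is the claim $P = P(op)$, where $P=\{op': \tup{\timestamp',op'}\in\bigcup_j G_j\}$ is the union of all records read in the loop of Line~\ref{loop-get}, and $P(op)=\{op'\in C(R): op'\prec op\}$. For the inclusion $P(op)\subseteq P$: take $op'\in C(R)$ with $op'\prec op$, issued by some process $k$, so by Definition~\ref{def:prec} we have $T'_k < T_k$, where $T'_k$ is the length of $L_k$ seen by $op'$ and $T_k$ is the length of $L_k$ seen by $op$ in its own loop. Since $op'$ is complete, its record $\tup{\timestamp(op'),op'}$ was appended to $L_k$; that append brought the length of $L_k$ to exactly $T'_k+1 \le T_k$. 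By linearizability of $L_k$, the $L_k.get()$ call inside $\apply(op,i)$, which returned a sequence of length $T_k \ge T'_k+1$, must therefore have included the record of $op'$; hence $op'\in P$. For the reverse inclusion $P\subseteq P(op)$: take $op'$ with $\tup{\timestamp(op'),op'}\in G_k$ for some $k$; then $op'$ is a complete operation (it completed its append, so $op'\in C(R)$), and its record sits at some position $\le T_k$ in $L_k$, so the length $T'_k$ of $L_k$ that $op'$ saw satisfies $T'_k + 1 \le T_k$, i.e. $T'_k < T_k$, which by Definition~\ref{def:prec} means $op'\prec op$, so $op'\in P(op)$.

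The main obstacle is being careful about the interface between ``complete'' operations in the abstract history (elements of $C(R)$, defined via the invocation/response convention fixed just before the lemma) and ``records present in some ledger'' — in particular, that every record read in Line~\ref{loop-get} indeed corresponds to an operation that is complete in $C(R)$, and conversely that ``$op'$ complete and $op'\prec op$'' forces $op'$'s record to already be visible to $op$. Both directions hinge on the linearizability of the individual DLOs $L_k$ together with the fixed points at which invocation (entering the loop, calling $L_1.get()$) and response (completing $L_i.append$) are declared to occur; this is exactly the role of the technical assumption made just before the lemma. A minor additional point to note is that distinct operations get distinct records (an operation appends exactly once, to its own ledger, with its own timestamp), so the map from records to operations is well defined and the two set descriptions match element-for-element. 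I would also remark that $\prec$ restricted to $P(op)$ is the order actually passed to $\valid$ and $\execute$, and since the code passes the full relation $\prec$ (which when restricted to the record set $P$ is the same relation), there is no discrepancy.
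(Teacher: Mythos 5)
Your proposal is correct and follows essentially the same route as the paper's proof: both reduce the lemma to the set identity $P = P(op)$ and establish the two inclusions via the invariant that the $j$th record appended to $L_k$ carries $T'_k = j-1$, together with linearizability of the ledgers. The only cosmetic difference is that you argue the inclusion $P(op)\subseteq P$ directly while the paper argues it by contradiction; your explicit remarks about record--operation correspondence and the restriction of $\prec$ to $P$ are fine but do not change the substance.
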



\begin{theorem}
Code~\ref{alg:generic-apply} implements a validated \regular object as defined in Definition~\ref{def:regular-object} in a crash-prone asynchronous system with linearizable SWMR DLOs.
\end{theorem}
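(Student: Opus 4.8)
The plan is to read the theorem as a short corollary of the three preceding lemmas. Fix an arbitrary (possibly infinite) run $R$ of Code~\ref{alg:generic-apply} in the stated model, take $\prec$ to be the relation of Definition~\ref{def:prec}, and verify the two items of Definition~\ref{def:regular-object} in turn.

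First I would pin down $C(R)$ for this implementation. By the boxed observation, a history of $R$ contains exactly the operations $op$ for which $\apply(op,i)$ returned $(\ACK,r)$; and by the technical convention stated just above Lemma~\ref{lemma:strict-order}, such an operation's invocation is its entry into the loop of Line~\ref{loop-get} and its response is the completion of $L_i.append(\tup{\timestamp,op})$ in Line~\ref{append-regular}. Hence, when the run's history is turned into a complete one, the only operations removed are those whose append never completed; these write no record in any DLO, so they neither belong to $C(R)$ nor contribute to the set $P$ computed by any other operation. Thus $\prec$ is a well-defined binary relation on $C(R)$, and by Lemma~\ref{lemma:strict-order} it is a strict order on $C(R)$.

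Next, the two items of Definition~\ref{def:regular-object}. Item~1, namely $op \rightarrow op' \implies op \prec op'$ for all $op,op' \in C(R)$, is precisely Lemma~\ref{lemma:implied-order}; so $\prec$ extends the real-time precedence order $\rightarrow$. For Item~2, fix $op \in C(R)$ issued by client $i$; I must show that the set $P$ computed in Line~\ref{p-def} of $\apply(op,i)$ is exactly $P(op)=\{op' \in C(R) : op' \prec op\}$, so that the $\valid$ test in Line~\ref{p-valid} and the ensuing $\execute$ call are the ones the definition asks for. Both the identity $P=P(op)$ and the statement that $\valid(\tup{P(op),\prec},op,i)=True$ and that $op$ returns $\execute(\tup{P(op),\prec},op,i)$ in its response event are exactly the content of Lemma~\ref{lemma:complete-validity}. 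I would also note in passing that the implementation is live: reliability of the DLOs makes the get-loop and the append of a correct process terminate, and $\valid$ and $\execute$ are local computations, so every $\apply$ call of a correct process returns; this is not required by Definition~\ref{def:regular-object} but is worth stating.

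The real obstacle is therefore not in the theorem but inside Lemma~\ref{lemma:complete-validity}, whose proof is deferred to the appendix: one must show that every record a process sees during its get-loop belongs to an operation of $C(R)$ that is $\prec$-below $op$, and conversely that every operation of $C(R)$ that is $\prec$-below $op$ already has its (unique) record in the ledger that $op$ reads --- that is, the snapshot $P$ the code works with coincides with $P(op)$. This uses the coordinate-based encoding of $\prec$ in Definition~\ref{def:prec} together with the fact that each DLO $L_j$ has a single writer (process $j$) whose operations are sequential by well-formedness, and the linearizability of the DLOs. Granting that lemma, together with Lemmas~\ref{lemma:strict-order} and~\ref{lemma:implied-order}, the theorem follows by literally matching the two items of Definition~\ref{def:regular-object}.
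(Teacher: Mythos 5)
Your proposal is correct and follows exactly the paper's route: the theorem is stated as an immediate consequence of Lemmas~\ref{lemma:strict-order}, \ref{lemma:implied-order}, and \ref{lemma:complete-validity}, which respectively give strictness of $\prec$, item~1, and item~2 of Definition~\ref{def:regular-object} (with the identification $P=P(op)$ established inside the appendix proof of Lemma~\ref{lemma:complete-validity}, as you anticipate). Your added remarks on discarding operations whose append never completes and on liveness are consistent with the paper's conventions and do not change the argument.
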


\remove{ 
It is not hard to modify Code~\ref{alg:generic-apply} into an algorithm that implements a validated \local object. All there is to do is supply $\valid()$ and $\execute()$ with partial order $\prec_l$ instead of $\prec$.

\begin{corollary}
The modified Code~\ref{alg:generic-apply} implements a validated \local object as defined in Definition~\ref{def:local-object} in a crash-prone asynchronous system with linearizable SWMR DLOs.
\end{corollary}
}

From the fact that reliable linearizable SWMR DLOs can be implemented in a crash-prone asynchronous system \cite{DBLP:journals/jpdc/ImbsRRS16}, we have the following corollary.

\begin{corollary}
It is possible to implement validated \regular 
objects in an asynchronous system with $n$ crash-prone processes from which up to $f < n/2$ can fail. 
\end{corollary}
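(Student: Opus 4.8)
The plan is to derive the corollary purely by composition, since the substantive work is already in the preceding theorem and in the remarks of this section about Imbs et al.~\cite{DBLP:journals/jpdc/ImbsRRS16}. That theorem shows that Code~\ref{alg:generic-apply}, run by $n$ processes on top of black-box \emph{reliable linearizable SWMR DLOs} $L_1,\dots,L_n$, implements a validated \regular object (Definition~\ref{def:regular-object}) in an asynchronous system. So I would: (i) supply concrete implementations of those DLOs that assume only $n$ crash-prone processes with $f<n/2$ failures; (ii) argue the composition preserves correctness; and (iii) argue it preserves liveness and track the resilience bound.

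For step (i) I would instantiate each $L_j$ with an atomic SWMR h-register as in Imbs et al.~\cite{DBLP:journals/jpdc/ImbsRRS16}: since an h-register's read returns the whole sequence of values written, it trivially yields a reliable linearizable SWMR DLO (get returns the history, append$(r)$ writes the next value), and, as already observed in this section, a minor modification of their construction runs on $n$ crash-prone processes tolerating $f<n/2$ crashes rather than $f<n/3$ Byzantine faults. In the target system the same $n$ processes play two roles: each runs the client code of Code~\ref{alg:generic-apply} and participates as a server in every one of the $n$ independent DLO instances $L_1,\dots,L_n$ (one per potential writer); as the $L_j$ are independent objects, running $n$ parallel copies of the construction is unproblematic.

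For steps (ii)--(iii): the proofs of the theorem and of Lemmas~\ref{lemma:strict-order}, \ref{lemma:implied-order}, and~\ref{lemma:complete-validity} use only two facts about the DLOs --- that each $L_j$ is a reliable linearizable SWMR DLO, and that one may reason about a single global linearization across \emph{different} ledgers (e.g.\ Lemma~\ref{lemma:strict-order} locates the linearization point of an append on $L_i$ relative to a get on $L_k$, $i\neq k$). Both facts survive instantiation because linearizability is a local, composable property~\cite{HW90}: a system each of whose objects is linearizable is itself linearizable, so there is one linearization of all get/append operations on all the $L_j$ that respects real time and every per-object specification. For liveness, each step of \apply is either local to the client or a get/append on some $L_j$; the former terminate trivially and the latter terminate, for a correct caller, by reliability of the DLO --- in particular $L_j.get()$ invoked by a correct client completes even if process $j$ has crashed, since the (multi-reader) DLO $L_j$ is implemented over the whole server pool and a crashed writer $j$ merely ceases to append. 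Hence no call to \apply by a correct client blocks, provided the DLO implementations make progress, which they do as long as fewer than $n/2$ of the $n$ processes crash. Since Code~\ref{alg:generic-apply} assumes nothing about the system beyond the availability of its DLOs, the composed system implements a validated \regular object under exactly the bound $f<n/2$.

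The only genuinely non-routine point --- the one I would write out carefully --- is the composition step: checking that replacing the abstract per-object linearizable DLOs by concrete implementations over a \emph{shared} pool of crash-prone processes does not break the cross-ledger arguments in Lemmas~\ref{lemma:strict-order} and~\ref{lemma:implied-order}; this is precisely where locality of linearizability does the work. A minor secondary point worth stating explicitly is the identification of the $n$ clients of Code~\ref{alg:generic-apply} with the $n$ servers of the h-register construction, which is what makes the uniform $f<n/2$ bound meaningful.
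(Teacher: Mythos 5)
Your proposal is correct and follows essentially the same route as the paper, which derives the corollary by composing the preceding theorem with the observation that reliable linearizable SWMR DLOs (via the h-registers of Imbs et al.) can be implemented over $n$ crash-prone processes with $f<n/2$ failures. You simply make explicit the composition and liveness details that the paper leaves implicit in its one-line justification.
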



\remove{

\begin{algorithm}[t]
\small
\caption{Byzantine-tolerant implementation of the $apply$ function for a validated \regular object $O$ that uses linearizable SWMR DLOs $L_j, j\in [1,n]$. The code is for process $i \in [1,n]$. }
\label{alg:generic-apply-byzantine}
\begin{algorithmic}[1]
    \Function{apply}{$op,i$}
    \For {$j=1$ to $n$} \label{loop-get}
      \State $G_j  \leftarrow L_j.get()$
      \If{$G_j=(r_1, r_2, \ldots)$ contains incorrect records} \label{byz-filter}
        \State $G_j \leftarrow (r_1, \ldots, r_k)$, where $r_{k+1}$ first incorrect record in $G_j$ \label{byz-filter2}
      \EndIf
      \State $T_j \leftarrow |G_j|$
    \EndFor
    \Loop
        \For {$j=1$ to $n$}
            \For {$i=1$ to $T_j$}
                \If{$\exists T'_b \in r_i : T'_b > T_b$}
                    \State $T_j \leftarrow i-1$
                    \State $|G_j| \leftarrow (r_1, \ldots, r_{i-1})$
                \EndIf
            \EndFor
        \EndFor
        \If{no record was discarded}
            \State {\em break}
        \EndIf
    \EndLoop
    \State $\timestamp \leftarrow (i, T_1, \ldots, T_i, \ldots, T_n)$
    \State $P \leftarrow \{op': (\tup{\timestamp',op'} \in \bigcup_j G_j) \land (op' \prec op) \}$ \Comment{$\prec$ as defined in Def.~\ref{def:prec}} \label{p-def}
    \State $V \leftarrow \emptyset$
    \While{$P \neq \emptyset$} \Comment{filter invalid operations until all are parsed}
        \State $op' \leftarrow op'' \in P : \nexists op''', op''' \prec op''$ \Comment{take an operation without preceding ones}
        \If{$valid(\tup{V, \prec}, op', j)$} \Comment{$j : op' \in L_j$}
            \State $V \leftarrow V \cup \{op'\}$
            \State $P \leftarrow P\backslash\{op'\}$
        \EndIf
    \EndWhile
    \If{$valid(\tup{V, \prec}, op, i)$} \label{p-valid}
        \State $L_i.append(\tup{\timestamp,op})$ \label{append-regular}
        \State $res \leftarrow \execute(\tup{V, \prec}, op, i)$
        \State \textbf{return} $(ACK, res)$ \label{return-regular}
    \Else
        \State \textbf{return} $(NACK)$
    \EndIf
    \EndFunction
\end{algorithmic}
\end{algorithm}

\subsection{Byzantine-tolerant algorithm for validated \regular objects}

We consider now an asynchronous system formed by $n$ processes that may be Byzantine. The system also has, for each process $i$, a linearizable SWMR DLO $L_i$ as describe above. Observe that a Byzantine process $j \neq i$ cannot change the state of ledger $L_i$. Moreover, it can only modify the state of its own ledger $L_j$ by appending in it. No record appended can be changed nor removed. 

Code~\ref{alg:generic-apply-byzantine} shows the code that implements the function $\apply()$ for process $i$. As can be seen, it is very similar to Code~\ref{alg:generic-apply}. The only difference is the filter in Lines~\ref{byz-filter}-\ref{byz-filter2}, that removes all records from the
sequence $G_j$ read from DLO $L_j$ starting from the first incorrect record. Assuming all previous records in $G_j$ are correct, the $a$-th record $r_a$ of $G_j$ is incorrect if
\begin{itemize}
    \item It does not follow the record syntax, or
    \item It follows the record syntax 
    $r_a=\tup{(k,T'_1,\ldots,T'_j, \ldots, T'_n), op'}$, but 
    \begin{itemize}
        \item $k\neq j$ or
        \item $T'_j \neq a-1$ or
        \item $\exists b \in [1,n]$ such that $T'_b < 0$, or
        \item $a>1$, $r_{a-1}=\tup{(j,T''_1, \ldots, T''_n), op''}$ and $\exists b \in [1,n]$ such that $T'_b < T''_b$.
    \end{itemize}
\end{itemize}

\af{[In order to guarantee that $\prec$ is a partial order we cannot allow a Byzantine process $k$ to create a timestamp in which $T_i$ is larger than the current value. A solution could be that every value $T_i$ used is replaced by a pair $\tup{T_i,nonce(T_i)}$ in the timestamps, where $nonce(T_i)$ is a random nonce generated by $i$ and appended with the record in position $T_i+1$ (associated to the record in position $T_i+1$). Then, process $k$ cannot make up the nonce before it is appended to the ledger.]}

} 

\section{Validated Totally-ordered Objects}
\label{sec:implementing-order-objects}


\begin{algorithm}[t]
\captionsetup{name=Code}
\small
\begin{multicols}{2}
\begin{algorithmic}[1]
	{\State $S \leftarrow \emptyset$ \Comment{$S$ is a sequence of operations}
    
    \Function{apply}{$op,i$}
    \State $ret \leftarrow \bot$
    \State $AB.broadcast(\tup{op,i})$
    \State wait until $ret \neq \bot$
    \State \textbf{return} $(ret)$
    \EndFunction
    \Statex
    \Upon{$AB.deliver(\tup{op,j})$}
        \If{$\valid(S, op, j)$}
            \State $r \leftarrow \execute(S,op,i)$
            \State $S \leftarrow S || op$ 
            \If{$j=i$}
                $ret \leftarrow (\ACK, r)$
            \EndIf
        \Else
            \If{$j=i$}
                $ret \leftarrow (\NACK, -)$
            \EndIf
        \EndIf
    \EndUpon
    }
\end{algorithmic}
\end{multicols}
\caption{
Implementation of the $apply$ function for a validated \order object $O$ that uses an Atomic Broadcast service. The code is for process $i \in [1,n]$. }
\label{alg:bab-apply}
\end{algorithm}

\subsection{Implementing Validated Totally-ordered Objects with Consensus}
\label{sec:consensus}

We consider now {the set of} validated \order objects as a whole. The first observation is that an object without validation can be seen as a
validated object in which the $\valid()$ predicate always holds. Hence, an object with consensus number $k$~\cite{WaitFree91} will also have at least consensus number 
$k$ in its validated version.

Code~\ref{alg:bab-apply} shows an algorithm that can be used to implement a validated \order object using an Atomic Broadcast service, 
which is known to be equivalent to Consensus~\cite{DBLP:books/sp/Raynal18} (and to MWMR Distributed Ledger Objects~\cite{DBLP:journals/sigact/AntaKGN18}). 
 An Atomic
  Broadcast (AB) service
  \cite{DefagoSU04,coelho2018byzantine,CRISTIAN1995158,DBLP:conf/srds/MilosevicHS11},
       %
       ensures reliable and total ordering of the messages
         exchanged.  Such a communication abstraction is based on
         appropriate {crash-tolerant or} Byzantine-tolerant consensus
        algorithms~\cite{DefagoSU04,DBLP:books/sp/Raynal18}.
       
  \sloppy{The service has two operations, $AB.broadcast(m)$
used by a process to broadcast a message $m$ to all other processes, and $AB.deliver(m)$ used by the service to deliver a message $m$ to a process.}
 From a user point of view, {the AB service} is defined by the following properties:\vspace{-.3em} 
 \begin{itemize}[leftmargin=3mm]
 \item \textit{Validity}:
 if a correct process AB.broadcasts a message, it eventually AB.delivers it.
\item
  \textit{Agreement}: if a correct process AB.delivers a message,
  all correct processes will eventually AB.deliver that message.
\item
  \textit{Integrity}: a message is AB.delivered by a correct process at
  most once, and only if it was previously AB.broadcast.
\item
  \textit{Total Order}: the messages AB.delivered by the correct processes
  are totally ordered (i.e., if a correct process AB.delivers message $m$
  before message $m'$,  every correct process AB.delivers these  message
  in the same order).\vspace{-.3em}
\end{itemize}

Note that if the AB service used is a crash-tolerant one, then Code~\ref{alg:bab-apply} provides crash-tolerant implementation of the $apply$ function, whereas if a Byzantine-tolerant AB service is used, then we have a Byzantine-tolerant implementation of $apply$.
It follows that 
Code~\ref{alg:bab-apply} implements a validated \order object defined by the $\valid()$ and $\execute()$ functions. 
\begin{theorem}
Code~\ref{alg:bab-apply} implements a validated \order object as defined in Definition~\ref{def:order-object} in a fault-prone asynchronous system with an Atomic Broadcast service.\vspace{-.3em}
\end{theorem}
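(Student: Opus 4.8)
The plan is to verify that Code~\ref{alg:bab-apply} satisfies the two conditions of Definition~\ref{def:order-object}. The key insight is that the Atomic Broadcast service's Total Order property directly induces the required total order $\pprec$ on $C(R)$: for two completed operations $op$ and $op'$ that are AB.delivered, set $op \pprec op'$ if and only if every correct process AB.delivers $op$ before $op'$. By the Agreement and Total Order properties, this is well-defined and is a total order on the set of delivered messages; I then need to argue that $C(R)$ (the set of complete operations, i.e., those for which $\apply$ returned $(\ACK, r)$) is exactly the set of delivered-and-valid messages, so $\pprec$ restricts to a total order on $C(R)$.

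First I would establish condition (1), that $op \rightarrow op' \implies op \pprec op'$. If $op$ completes before $op'$ is invoked, then the process issuing $op$ has exited its "wait until $ret \neq \bot$" loop, which means $op$ has been AB.delivered to that process and hence (by Validity and Agreement) AB.delivered to all correct processes. Since $op'$ is AB.broadcast only after its invocation, which is after $op$'s response, the Integrity and Total Order properties of AB ensure every correct process delivers $op$ before $op'$; thus $op \pprec op'$. A small technical point here: $op$ being "complete" must be tied to its AB.delivery and processing, analogous to the invocation/response convention made explicit for Code~\ref{alg:generic-apply}; I would state the corresponding convention (invocation at \texttt{AB.broadcast}, response when $ret$ is set) so the argument is airtight.

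Next I would establish condition (2). Fix a complete operation $op$ issued by $i$, with $P(op) = \{op' \in C(R) : op' \pprec op\}$. The crucial observation is that the local state variable $S$ at process $j$ at the moment it executes the \texttt{upon AB.deliver}($\langle op, i\rangle$) handler is precisely the sequence of all operations that were (a) delivered before $op$ at $j$ and (b) found valid by the $\valid$ check in earlier handler invocations. Because AB delivers the same sequence of messages in the same order to all correct processes, and the $\valid$/$\execute$ computation is deterministic and depends only on $S$, every process computes the same $S$ when it handles $op$; call it $S_{op}$. I would prove by induction on delivery order that $S_{op}$, viewed as a set with the order $\pprec$ restricted to it, equals $\langle P(op), \pprec\rangle$: an operation $op'$ is in $S_{op}$ iff it was delivered before $op$ and passed validation, which by the induction hypothesis is iff $op' \in C(R)$ and $op' \pprec op$. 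Given that, the handler evaluates $\valid(S_{op}, op, j) = \valid(\langle P(op),\pprec\rangle, op, i)$ — note the handler uses argument $j$, but for the issuer the relevant invocation is the one with $j = i$ — and since $op \in C(R)$ means $\apply(op,i)$ returned $(\ACK, r)$, this validity test returned $True$, and $r = \execute(S_{op}, op, i) = \execute(\langle P(op), \pprec\rangle, op, i)$ was returned in the response event.

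The main obstacle I anticipate is the bookkeeping around which operations actually belong to $C(R)$ versus which are merely delivered: invalid operations are delivered and processed (updating no state) but are not in any history, and I must make sure the order $\pprec$ on $C(R)$ is inherited cleanly from the AB delivery order while the $\valid$/$\execute$ arguments see exactly the valid predecessors. A secondary subtlety is handling the issuer/non-issuer asymmetry in the handler (the $j = i$ branches) and the completion convention for operations whose issuing process crashes after delivery but before setting $ret$ — I would adopt the same "complete it or remove it" convention from Section~\ref{sec:object-definition} and note that a delivered, validated operation can be treated as complete since its effect is already reflected at all correct processes. Once these modeling points are pinned down, the three AB properties (Agreement, Integrity, Total Order) do all the real work and the proof is short.
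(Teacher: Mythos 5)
Your proposal follows essentially the same route as the paper's proof: the AB delivery order induces the total order $\pprec$, Agreement and Total Order guarantee (by induction) that all correct processes maintain the identical sequence $S$, and Validity ensures the issuer's call returns $(\ACK,\execute(S_i(op),op,i))$ exactly when $\valid(S_i(op),op,i)$ holds. Your write-up is in fact more thorough than the paper's — it explicitly verifies condition (1) of Definition~\ref{def:order-object} and the identification of the local state with $\tup{P(op),\pprec}$, which the paper's proof leaves implicit — but the underlying argument is the same.
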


\begin{proof}
The claim holds from the following observations. Firstly, from the Agreement and Total Order properties of the AB service, all correct processes
AB.deliver the same tuples $\tup{op,j}$ in the same order. This guarantees (by induction) that the sequence $S$ maintained in all correct processes is the same. Moreover, for every $op \in S$ it holds that $\valid(S(op), op, j)=True$, where $S(op)$ is the subsequence 
preceding $op$ in $S$. Finally, for every invocation $\apply(op,i)$ by a correct process $i$, the Validity of the AB service guarantees that the tuple $\tup{op,i}$ will be AB.delivered to $i$. Let $S_i(op)$ be the local value of the sequence $S$ when $op$ is AB.delivered to $i$. Then,
the call $\apply(op,i)$ returns $(\NACK, -)$ if $\valid(S_i(op), op, i)=False$, and it returns $(\ACK, \execute(S_i(op), op, i))$ if $\valid(S_i(op), op, i)=True$.
\end{proof}

\remove{
\af{[AF: Give details why this is correct.]}
\cg{[CG: I guess here it should suffice to say that $apply$ is using the total order provided by the AB service to satisfy the properties required in Definition~\ref{def:order-object}.]}
}

\subsection{Persistent Validity}
\label{sec:persistent-validity}

With Code~\ref{alg:bab-apply} we have shown that all validated \order objects can be implemented with a Consensus / Atomic Broadcast service. In this section we explore conditions in the $\valid()$ and $\execute()$ functions that may allow a validated \order object to be implemented without consensus. We first define a property of some objects that we call 
\emph{persistent validity.}

\remove{
\begin{definition}
\label{def:pv}
    Given 
    a validated 
    object together with its associated partial order $\prec$, the $valid()$ function is said to be {\em persistent} iff for every run $R$, 
    $\forall op : \valid(\tup{P(op), \prec}, op, i)=True$ then $\nexists op' \in C(R), j \neq i : \valid(\tup{P(op), \prec}, op', j)=True \land \valid(\tup{P(op) \cup op', \prec}, op, i)=False.$ Recall that $P(op)$ is defined as
    $P(op)=\{op': op' \in C(R) \land op' \prec op\}$, $\forall op \in C(R)$.
\end{definition}
}

\begin{definition}
\label{def:pv}
    Given 
    a validated {\order} object together with 
    its $valid()$ predicate, we say that the object satisfies 
    {\em persistent validity} iff for every run $R$, with order $\pprec$,
    every prefix $S$ of $S(R)$\footnote{Recall that $S(R)$ is the sequence of operations in $C(R)$ totally ordered by $\pprec$.},
    and every operation $op_i \notin S$,
        if $\valid(S, op_i, i)=True$ then 
    $\nexists op_j \not\in S, j \neq i:
     \valid(S, op_j, j)=True \land \valid(S || op_j, op_i, i)=False.$
\end{definition}

Persistent validity informally says that once an operation is valid, then it cannot be made invalid by operations issued by the other processes.
%
In Section~\ref{sec:positivePV}
we show how {\em persistent validity} can help in the implementation of different objects according to different consistency criteria. But first, we investigate the implications of a validation predicate $\valid()$ for which the {\em persistent validity} does not hold.

\remove{
We will be referring to the above as the {\em persistent validity (PV) property.}

In respect to Definitions~\ref{def:linearizable-object} of a validated \linearizable object, the {\em persistent validity} property definition remains unchanged except for the $\pprec$ operator in the calls to $\valid()$ that changes to $\pprec_l$.
}

\subsection{Total Order Without Persistent Validity Is as Strong as Consensus}
\label{sec:negative}

We demonstrate that a validated \order object whose $\valid()$ function does not satisfy the {\em persistent validity} property, is as strong as consensus. In order to do that, we will demonstrate that such an object can be used to solve the consensus problem between two crash-prone processes in an asynchronous system with at most one failure. 
%
\remove{

\af{[AF: I propose to replace the following with Observation \ref{obs:non-pv-linearizable}. My problem with the definition is that it
forces that \textbf{for every run} there are operations $op_i$ and $op_j$. This is much stronger than the negation of persistent validity. The negation of persistent validity is that \textbf{it exists one run} $R$ with operations $op_i$ and $op_j$.]}

We provide Definition~\ref{def:non-pv-linearizable} to characterize its $valid$ function.

\begin{definition}
\label{def:non-pv-linearizable}
    Given 
    a validated \cg{\order} 
    object together with its associated order $\pprec$, 
    the $valid()$ function is not {\em persistent} iff for every run $R$, 
    $\forall op_i :$ let $S=\tup{P(op_i), \pprec}$ be the sequence of operations that precede $op_i$, if $\valid(S, op_i, i)=True$ then $\exists op_j \in C(R) \cg{\land (op_j \not\in S), j \neq i :} \valid(S, op_j, j)=True \land \valid(S || op_j, op_i, i)=False.$ Recall that $P(op)$ is defined as
    $P(op)=\{op': op' \in C(R) \land op' \pprec op\}$, $\forall op \in C(R)$.
\end{definition}

Informally, Definition~\ref{def:non-pv-linearizable} says that once a generic operation $op_i$ issued by client $i$ results valid, then there exists a valid operation $op_j$ issued by a client $j$ that, if ordered before $op$, invalidates it.

\af{[AF: End of text to be replaced]}
} 

\begin{observation}
\label{obs:non-pv-linearizable}
Let $O$ be a validated \order object \emph{without persistent validity.} Then, there is a run $R$ of $O$, a prefix $S \subseteq S(R)$, and operations $op_i, op_j \notin S$ issued by processes $i \neq j$, such that $\valid(S, op_i, i)=True
\land \valid(S, op_j, j)=True \land \valid(S || op_j, op_i, i)=False.$
\end{observation}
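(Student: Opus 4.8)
The plan is to observe that the statement is simply the logical negation of Definition~\ref{def:pv}, so the proof reduces to carefully pushing the negation through the quantifiers of that definition; there is no combinatorial content to speak of.

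First I would recall, verbatim, what persistent validity asserts: $O$ satisfies persistent validity precisely when, for every run $R$ with total order $\pprec$, every prefix $S$ of $S(R)$, and every operation $op_i \notin S$, the implication ``if $\valid(S, op_i, i)=True$ then $\nexists op_j \notin S,\ j\neq i:\ \valid(S, op_j, j)=True \land \valid(S || op_j, op_i, i)=False$'' holds. Since $O$ is assumed \emph{not} to satisfy persistent validity, the negation of this universally quantified sentence holds. Negating the three outer universal quantifiers produces a run $R$, a prefix $S \subseteq S(R)$, and an operation $op_i \notin S$ for which the implication fails; that is, $\valid(S, op_i, i)=True$ while the inner ``$\nexists$'' clause is false. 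Negating ``$\nexists op_j$'' yields ``$\exists op_j \notin S$ with $j \neq i$'' such that $\valid(S, op_j, j)=True$ and $\valid(S || op_j, op_i, i)=False$. Conjoining this with $\valid(S, op_i, i)=True$ gives exactly the three conditions claimed (with $i \neq j$), which is the statement of the observation.

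I do not expect a genuine obstacle here. The only points that need a moment of care are: making sure the process identities are tracked correctly (the inequality $j \neq i$ is already baked into Definition~\ref{def:pv}, so it transfers directly), and noting that $op_i$ and $op_j$ need not belong to $C(R)$ — they are arbitrary operations that $i$ (resp.\ $j$) could issue, which is consistent with the way $\valid()$ is treated throughout as a function of its arguments alone and not of the run. If desired, one could add half a sentence emphasizing that the observation is phrased existentially over runs precisely to match this negation, in contrast to the (stronger, and hence avoided) alternative of requiring such $op_i, op_j$ to exist in \emph{every} run.
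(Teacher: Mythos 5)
Your proposal is correct and matches the paper's treatment exactly: the paper states this as an \emph{Observation} precisely because it is nothing more than the negation of Definition~\ref{def:pv} pushed through the quantifiers (turning ``for every run / prefix / operation'' into ``there exists a run / prefix / operation'' and the inner ``$\nexists op_j$'' into ``$\exists op_j$''), and it offers no further argument. Your closing remark about the existential-over-runs phrasing being deliberately weaker than a for-every-run formulation is also exactly the consideration reflected in the paper's framing.
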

Informally, Observation~\ref{obs:non-pv-linearizable} says that there is a run $R'$ derived from $R$ in which $op_i$ issued by client $i$ is valid if ordered after $S$, but there exists another valid operation $op_j$ issued by a client $j \neq i$ that, if ordered before $op_i$, invalidates it.
Note that no information is given on $op_j$, so it is not known if the inverse is true, i.e., whether $op_i$, if ordered before $op_j$, invalidates it.


We show now that object $O$, the prefix $S$, and the operations $op_i$ and $op_j$ can be used by processes $i$ and $j$ to reach consensus in an asynchronous system in which one of them can fail by crashing. Since without the object $O$ it is known that in such a system consensus cannot be solved, we conclude that $O$ is what allows to solve consensus.

In the rest of the section we hence assume a distributed system in which (at most) one process can crash, and computations happen in an asynchronous way so we can not make any assumption about processes relative speeds. The object $O$ is assumed to be reliable, i.e. it does not fail or crash in any way. In addition, processes $i$ and $j$, and the object $O$ can use a reliable shared memory formed of atomic SWMR registers. As said before, such a shared memory can be implemented in an asynchronous message passing system if a majority of processes is correct
\cite{DBLP:journals/jacm/AttiyaBD95,DBLP:journals/jpdc/ImbsRRS16}. For our results to hold it is enough to assume that at most one process can crash, hence we assume $f=1$.
Then, while we focus on achieving consensus between processes $i$ and $j$, if required, in order to implement the object $O$ and the shared memory other processes can be involved. (In particular, at least a third process participates in the implementation of the shared memory to fulfill the requirement of a majority of correct processes.)

\begin{algorithm}[t]
\setlength{\columnsep}{3em}
\captionsetup{name=Code}
\small
\begin{algorithmic}[1]
\begin{multicols}{2}
\State Initialize object $O$ with the prefix $S$
\State Init: $cons\_register_i$ and $cons\_register_j$ are atomic SWMR registers writable only by $i$ and $j$ respectively, initially $\bot$.

\State Code for process $i$:
\Function{propose}{$v_i$} 
    \State \textbf{write}($cons\_register_i, v_i$)
    \State $r \leftarrow O.apply(op_i, i)$ 
    \If{$r = (\NACK, -)$}
        \State $v_j \leftarrow$ \textbf{read}($cons\_register_j$)
        \State \textbf{decide}($v_j$)
    \Else ~
        \textbf{decide}($v_i$)
    \EndIf
\EndFunction
\State Code for process $j$:
\Function{propose}{$v_j$}
    \State \textbf{write}($cons\_register_j, v_j$)
    \State $r \leftarrow O.apply(op_j, j)$
    \If{$r = (\NACK, -)$}
        \State $v_i \leftarrow$ \textbf{read}($cons\_register_i$)
        \State \textbf{decide}($v_i$)
    \Else ~
        \textbf{decide}($v_j$)
    \EndIf
\EndFunction
\end{multicols}
\end{algorithmic}
\caption{Algorithm solving consensus for processes $i$ and $j$ when $op_i$ and $op_j$ invalidate each other.}
\label{alg:impossibility-1}\vspace{-1em}
\end{algorithm}

\noindent
\textbf{Both operations invalidate each other.}
Let us first assume that the two operations $op_i$ and $op_j$ are exclusive, i.e., if any one of the two is executed on the object $O$ after prefix $S$, it makes invalid the other. In this case, Code~\ref{alg:impossibility-1} can be used by processes $i$ and $j$ to reach consensus. Observe that the code used by the two processes is completely symmetric.

\begin{lemma}
\label{lem:inval}
Let $O$ be a validated \order object without persistent validity, and let 
prefix $S$, processes $i$ and $j$, and operations $op_i, op_j \notin S$ as in Observation~\ref{obs:non-pv-linearizable}. Moreover, assume that
$\valid(S || op_i,$ $op_j, j)=False$. Then Code~\ref{alg:impossibility-1} allows 
processes $i$ and $j$ to reach consensus. 
\end{lemma}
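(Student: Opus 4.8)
The plan is to verify the three consensus properties---termination (wait-freedom), agreement, and validity---for Code~\ref{alg:impossibility-1}, exploiting the symmetry of the code and the hypothesis that $op_i$ and $op_j$ mutually invalidate each other when appended after $S$.

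First I would establish termination. Since the object $O$ is reliable and every invocation by a correct process eventually completes, and since the atomic SWMR registers are reliable, the only blocking step is the read of the other process's register. Here I would use the failure assumption $f=1$: if process $i$ is correct and reaches the branch $r=(\NACK,-)$, then $O.apply(op_i,i)$ returned $\NACK$, which (by the semantics of the validated $\order$ object and the mutual-invalidation hypothesis) can only have happened because $op_j$ was ordered before $op_i$ in $S(R)$; since $op_j$ was validated and executed, process $j$ must have invoked $O.apply(op_j,j)$, and hence $j$ executed $\textbf{write}(cons\_register_j, v_j)$ \emph{before} that invocation. Therefore when $i$ later reads $cons\_register_j$ it finds $v_j \neq \bot$. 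The symmetric argument applies to $j$. So a correct process always decides.

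Next, agreement. The key is that $\apply(op_i,i)$ and $\apply(op_j,j)$ cannot both return $\ACK$: by Definition~\ref{def:order-object} there is a single total order $\pprec$ on $C(R)$ extending $S$, so one of $op_i, op_j$ comes first among the two; whichever is second sees the other in its prefix $P(\cdot)$ and, by the mutual-invalidation hypothesis ($\valid(S\,||\,op_j, op_i, i)=False$ and $\valid(S\,||\,op_i,op_j,j)=False$), is found invalid---so it cannot be in $C(R)$, contradiction, unless that second operation simply returns $\NACK$. Hence exactly one of the two $\apply$ calls returns $\ACK$ (if both processes are correct and invoke it), or at most one does (if one crashes). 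I would then do a small case analysis: the process whose operation is valid decides its own value; the process whose operation is invalid reads the other register and decides the other value. Because the valid process wrote its proposal to its register \emph{before} calling $\apply$, and that write is the value the other process reads, both decide the same value. The case where one process crashes is immediate: the surviving correct process, if its own operation is valid, decides its own value; if invalid, it reads the (necessarily already-written, as argued above) register of the crashed process and decides that value---and no conflicting decision exists.

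Finally, validity: every decided value is either $v_i$ (written by $i$) or $v_j$ (written by $j$), hence was proposed by some process. The main obstacle---and the place I would be most careful---is the termination argument in the $\NACK$ branch: one must rule out the scenario where $i$ gets $\NACK$ but $j$ has not yet written its register (or has crashed before writing). This is exactly where the ordering semantics of the $\order$ object is essential: a $\NACK$ for $op_i$ after prefix $S$ forces $op_j$ (or some other conflicting operation) to have been ordered and executed first, which in this two-process setup means $j$ had already passed its \textbf{write} line. I would state this as a short lemma-internal claim and then the rest is routine.
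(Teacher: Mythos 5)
Your proof is correct and follows essentially the same route as the paper's: the write-before-\texttt{apply} ordering guarantees the other process's register is already populated whenever a $\NACK$ branch is taken, mutual invalidation guarantees at most one $\ACK$, and the $\ACK$-holder's value is the one decided on both sides. You organize the argument by consensus property (termination/agreement/validity) rather than by the paper's case split on crashes, and you are somewhat more explicit about why exactly one \texttt{apply} call returns $\ACK$, but the substance is the same.
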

\begin{proof}
     Process $i$ first writes its proposed value $v_i$ in its own register $cons\_register_i$ and then calls $O.apply(op_i,i)$. Process $j$ does the same with register $cons\_register_j$ and call $O.apply(op_j,j)$. By assumption, only one of the operations $op_i$ and $op_j$ is found valid. Then, if process $i$ receives an $\ACK$ from $apply(op_i,i)$, it can safely decide $v_i$, knowing that process $j$ will receive $\NACK$ and decide $v_i$ as well. On the other hand, if process $i$ receives $\NACK$ and process $j$ receives $\ACK$, value $v_j$ is decided by both processes.
     
     Let us now assume that one process crashes; wlog, process $j$. If process $j$ never issued the call $O.apply(op_j,j)$ or the call was issued but $op_j$ was found invalid, then $O.apply(op_i,i)$ will return $\ACK$ and process $i$ will decide
     $v_i$. If, on the other hand, $j$ issued the call $O.apply(op_j,j)$ and $op_j$ was found valid, then process $i$ receives a $\NACK$, and reads $cons\_register_j$. Since the value $v_j$ was written in $cons\_register_j$ by $j$ before calling $O.apply(op_j,j)$, the read operation completes and returns $v_j$, which is the value decided by process $i$. Process $j$ cannot decide a different value, since 
     $O.apply(op_j,j)$ returns $\ACK$.\vspace{-.5em}
\end{proof}

\begin{wrapfigure}{R}{0.47\textwidth}
    \begin{minipage}{0.47\textwidth}
    \vspace{-25pt}
\begin{algorithm}[H]
\captionsetup{name=Code}
\small
\caption{$\mathit{\ApplyIfValid}(op)$ function to communicate with $O$. 
It returns $(s,r)$, where $s \in \{\ACK, \NACK\}$. 
Code for process $k$.}
\label{alg:applyIfValid}
\begin{algorithmic}[1]
    \State init: $oplist_k$ are SWMR vectors writable only by $k$, initially $\bot$
    \State init: $reslist_k$ are SWMR vectors writable only by $O$, initially $\bot$
    \State init: $c_k \leftarrow 1$ \Comment{$c_k$  is a local variable of $k$}
    \Function{$\ApplyIfValid$}{$op,k$}
        \State \wwrite($oplist_k[c_k], op, k$)
        \State wait until $reslist_k[c_k] \neq \bot$
        \State $res \leftarrow$ \rread($reslist_k[c_k]$)
        \State $c_k \leftarrow c_k + 1$
        \State \textbf{return} $(res)$
    \EndFunction
\end{algorithmic}
\end{algorithm}
    \end{minipage}\vspace{-.8em}
  \end{wrapfigure}

\remove{
In the general case, we only know by hypothesis that one operation gets invalidated and we don't know nothing about the validity of the other one. 
We now deal with the case in which
the second operation does not get invalidated by the other one. For simplicity, $op_i$ is the operation that gets invalidated by $op_j$ but $op_j$ remains valid independently from $op_i$. \af{[AF: I do not agree this to be the general case. The general case is when $op_i$ is the operation that gets invalidated by $op_j$, and $op_j$ may or may not be invalidated by $op_i$. If we want to keep them as two different cases is fine, but we have to be careful because if we remove the case in which they invalidate each other we have to use this version I describe.]}\ar{[AR: We can make it general adding a check on $res$ in Line 16 of Code~\ref{alg:impossibility-2}]}
}

\noindent
\textbf{Operation $op_i$ does not invalidate operation $op_j$.}
We now deal with the case in which $op_j$ makes $op_i$ invalid, but $op_i$ does not invalidate $op_j$. Observe that Code~\ref{alg:impossibility-1} does not solve this case because, since $op_j$ is always valid, the value returned by call $O.apply(op_j,j)$
does not allow process $j$ to know whether $op_i$ was found valid.
%
Notice that we use the validated \order object as a black box. Therefore, process $j$
does not have direct access to the totally-ordered sequence of operations in the object. 
Thus, for process $j$ to know whether $op_i$ is found valid some
extra work needs to be done.
The key of the solution is the use of the shared memory available in the system to log
the calls $O.\apply()$ and the values they return. To do so, a generic process $k$ has a
SWMR vector $oplist_k$ through with $\apply()$ call are issued. The result of the call is written by object $O$ in a SWMR vector $reslist_k$ from where $k$ can read it. This process is encapsulated in the side of the generic caller process $k$ in the function
$\ApplyIfValid()$ presented in Code~\ref{alg:applyIfValid}.

\begin{wrapfigure}{R}{0.46\textwidth}
    \begin{minipage}{0.46\textwidth}
    \vspace{-25pt}
\begin{algorithm}[H]
\captionsetup{name=Code}
\small
\caption{Task executed by object $O$ to process the $\apply()$ calls issued by process $k$.}
\label{alg:applyJournaled}
\begin{algorithmic}[1]
    \State Init: $oplist_k$ and $reslist_k$ are the vectors from Code~\ref{alg:applyIfValid}
    \State init: $c_k \leftarrow 1$ \Comment{$c_k$  is a local variable of $O$}
    \Loop
        \State wait until $oplist_k[c_k] \neq \bot$
        \State $op \leftarrow$ \rread($oplist_k[c_k]$)
        \State $res \leftarrow \apply(op,k)$
        \State \wwrite($reslist_k[c_k], res$)
        \State $c_k \leftarrow c_k + 1$
    \EndLoop
\end{algorithmic}
\end{algorithm}
 \end{minipage}\vspace{-1.5em}
  \end{wrapfigure}

On its side, object $O$ is waiting for $\apply()$ calls being issued via the vector
$oplist_k$, and when one appears it applies it and writes in $reslist_k$ the corresponding result. This can be implemented with one concurrent task for each process $k$ as presented in Code~\ref{alg:applyJournaled}. Note that, since the object $O$ and the shared memory are both reliable, if an $\apply()$ call is written by process $k$ in $oplist_k$, eventually the corresponding response will be written in $reslist_k$, even if $k$ has crashed in the mid time.

With this logged method of using the object, the algorithm that processes $i$ and $j$ can use to solve consensus is presented in Code~\ref{alg:impossibility-2}. Observe that the
code for process $i$ is similar to the one in Code~\ref{alg:impossibility-1}, replacing
the call $O.\apply(op_i,i)$ with call $\ApplyIfValid(op_i,i)$. However, the code for process $j$ is different, since it has to access $oplist_i$ and $reslist_i$ to determine whether $op_i$ was found valid. 

\remove{
In order to proceed with the proof, we define in Code~\ref{alg:applyJournaled} the behavior of the $apply$ function provided by the validated \cg{\order} 
object. Code~\ref{alg:applyIfValid} is executed by the client according to Code~\ref{alg:impossibility-2}. In practice two vectors of registers, $oplist$ and $reslist$ keep track of the interaction among clients and the object. The code of client $i$ differs from the code of client $j$ to cover the asymmetry of behavior between $op_i$ and $op_j$.
\cg{[CG: I would make it a theorem.]}
\begin{lemma}
    Given a validated \cg{\order} 
    object whose $valid$ function conforms to Definition~\ref{def:non-pv-linearizable} then Code~\ref{alg:applyJournaled},  \ref{alg:applyIfValid}, and \ref{alg:impossibility-2} allow two arbitrary processes to reach consensus. 
\end{lemma}
\cg{[CG: In the proof below, we need to also consider the case of one of the processes failing - related to a previous comment.]} \af{[AF: Yes, the case of $i$ or $j$ failing has to be considered. Note that the object is assumed to be reliable, so when the operation is written in the shared memory there will always be a response from the object.}
}

\begin{lemma}
\label{lem:no-inval}
Let $O$ be a validated \order object without persistent validity, and let 
prefix $S$, processes $i$ and $j$, and operations $op_i, op_j \notin S$ as in Observation~\ref{obs:non-pv-linearizable}. Moreover, assume that
$\valid(S || op_i,$ $op_j, j)=True$. Then Codes~\ref{alg:applyJournaled},  \ref{alg:applyIfValid}, and \ref{alg:impossibility-2} allow 
processes $i$ and $j$ to reach consensus. 
\end{lemma}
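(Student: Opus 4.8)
The plan is to show that, under the stated hypotheses, Codes~\ref{alg:applyJournaled}, \ref{alg:applyIfValid} and \ref{alg:impossibility-2} solve consensus between $i$ and $j$ in an asynchronous system with at most one crash; since such a system cannot solve consensus on its own, this shows that the object $O$ is the source of that power. Throughout, $O$ is initialized with the prefix $S$, process $i$ only ever issues $op_i$ and process $j$ only ever issues $op_j$ (through $\ApplyIfValid$), so the totally ordered sequence of the run restricted to the new operations is one of $S, op_i, op_j$; $S, op_j, op_i$; or $S, op_j$ (when $i$ never issues $op_i$). First I would record the semantic facts following from Observation~\ref{obs:non-pv-linearizable} and the hypothesis $\valid(S\|op_i, op_j, j)=True$: (i) $op_j$ is valid both after $S$ and after $S\|op_i$, so whenever $j$ issues $op_j$ it is applied and $\ApplyIfValid(op_j,j)$ returns $\ACK$; (ii) $op_i$ is valid after $S$ but invalid after $S\|op_j$, so $op_i$ is applied \emph{iff} $O$ places it before $op_j$, in which case $\ApplyIfValid(op_i,i)$ returns $\ACK$ and otherwise $\NACK$. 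Hence the single bit ``did $O$ place $op_i$ before $op_j$?'' decides everything, and the goal reduces to having both processes learn it consistently.

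Next I would describe how each process learns that bit. Process $i$ gets it directly: it writes $v_i$ into $cons\_register_i$, calls $\ApplyIfValid(op_i,i)$, decides $v_i$ on $\ACK$, and on $\NACK$ reads $cons\_register_j$ and decides its content. Process $j$ cannot read the bit from its own return value (it always gets $\ACK$), so it relies on the log $oplist_i$/$reslist_i$ maintained by Code~\ref{alg:applyIfValid} on $i$'s side and by Code~\ref{alg:applyJournaled} on $O$'s side. The design point encoded in Code~\ref{alg:impossibility-2} is that $j$ first writes $v_j$ into $cons\_register_j$ and issues $op_j$ (this serves as a barrier), and only \emph{afterwards} reads $oplist_i[1]$: if $oplist_i[1]=\bot$ it decides $v_j$; if $oplist_i[1]\neq\bot$ it waits until $reslist_i[1]\neq\bot$ and then decides the content of $cons\_register_i$ if $reslist_i[1]=\ACK$, and $v_j$ if $reslist_i[1]=\NACK$. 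Termination is then routine: every $\ApplyIfValid$ call returns because $O$ and the shared memory are reliable (even if the caller crashes after writing its request, the task of Code~\ref{alg:applyJournaled}, being part of $O$, completes it); the wait in $j$, once entered, ends because $oplist_i[1]\neq\bot$ forces $O$'s task for $i$ to eventually write $reslist_i[1]$; and all register reads return.

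The core of the argument, and the step I expect to be the main obstacle, is to justify that $j$'s single non-blocking read of $oplist_i[1]$, performed after $\ApplyIfValid(op_j,j)$ has returned, correctly reveals whether $op_i$ was placed before $op_j$ --- i.e.\ that there is no race with a slow $i$. The key claim is: \emph{if $O$ places $op_i$ before $op_j$, then $i$ has written $oplist_i[1]$ before $\ApplyIfValid(op_j,j)$ returns.} This uses the real-time condition (item~1) of Definition~\ref{def:order-object}: $O$'s internal invocation of $\apply(op_i,i)$ inside the task of Code~\ref{alg:applyJournaled} occurs only after $i$ wrote $oplist_i[1]$; were that write to happen only after $\ApplyIfValid(op_j,j)$ returns, then $op_i$'s invocation would follow $op_j$'s response, giving $op_j\to op_i$ and hence $op_j\pprec op_i$, contradicting the placement. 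The same real-time condition also guarantees that whenever $i$ reads $cons\_register_j$ (which happens only after $O$ has rejected $op_i$, hence placed $op_j$ before it) process $j$ has already written $v_j$ there. With these facts the case analysis closes: (a) if $O$ places $op_i$ before $op_j$, then $i$ sees $\ACK$ and decides $v_i$, while $j$ --- reading $oplist_i[1]$ after $op_j$ completed --- sees $oplist_i[1]\neq\bot$, waits, reads $reslist_i[1]=\ACK$, and decides the content $v_i$ of $cons\_register_i$; (b) if $O$ places $op_j$ before $op_i$ or $i$ never issues $op_i$, then $op_i$ is rejected or absent, so $i$ (if alive and having issued $op_i$) sees $\NACK$ and decides $v_j$, while $j$ either reads $oplist_i[1]=\bot$ and decides $v_j$, or reads $oplist_i[1]\neq\bot$, waits, finds $reslist_i[1]=\NACK$, and decides $v_j$. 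In every case both processes decide the same value, which had been proposed; together with Termination this proves that the three codes solve consensus.
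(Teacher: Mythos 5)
Your proof is correct and follows essentially the same route as the paper's: process $i$ decides from the $\ACK$/$\NACK$ of its own call, process $j$ consults the $oplist_i$/$reslist_i$ log after its (always-$\ACK$) call completes, and the case where $op_i$ is absent from the log is resolved by the real-time ordering forcing any later $op_i$ to be invalidated. If anything, your explicit ``key claim'' (that $op_i \pprec op_j$ forces $i$'s write to $oplist_i$ to precede the completion of $op_j$, via property (1) of Definition~\ref{def:order-object}) makes rigorous a step the paper's proof only asserts informally.
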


 \begin{proof}
     \sloppy{Without crashes, both processes $i$ and $j$ start by writing their proposed values $v_i$ and $v_j$ in their respective $cons\_register_i$ and $cons\_register_j$.} Then, they call $\ApplyIfValid()$ with their operations $op_i$ and $op_j$. As in Code~\ref{alg:impossibility-2}, process $i$ waits for response and decides $v_i$ or $v_j$ depending on whether $op_i$ was found valid or not. This is determined from the value returned by the $\ApplyIfValid(op_i,i)$ call.
    
     On its hand, process $j$ always receives $\ACK$ from the $\ApplyIfValid(op_j,j)$ call,
     since operation $op_j$ is found valid by hypothesis. So, it can not use this to know whether $op_i$ precedes $op_j$ and was hence found valid. Instead,
    it first checks if process $i$ submitted $op_i$ via a $\ApplyIfValid(op_i,i)$ call by searching in the $oplist_i$ vector. If $op_i$ was not submitted, then process $j$ can safely decide $v_2$ (line~\ref{line:imp2:op_sub_check_no}), because if it is
    submitted now it will be found invalid. Note that process $i$ will decide $v_2$ as well.
    
     If $op_i$ is found in $oplist_i$ (line~\ref{line:imp2:op_sub_check}), then process $j$ needs to wait for the result of $\ApplyIfValid(op_i,i)$ by reading from register $reslist_i$. As mentioned, because of the reliability of the object and the shared memory, the result will eventually be written there. At this point, if the result of $\ApplyIfValid(op_i,i)$ is $\ACK$ then it means that $op_i$ was ordered before $op_j$, and the value to be decided is $v_i$. If it is $\NACK$ then $op_j$ has been ordered
    before $op_i$, $op_i$ was invalid, and the value to be decided is $v_j$. In either case, the decided value is consistent with the one decided by process $i$, solving consensus between the two processes.
    
    The correctness for the case when process $j$ crashes is as in the proof of Lemma~\ref{lem:inval}. If process $i$ crashes before writing $op_i$ in $oplist_i$, then $j$ decide $v_j$ as described above. Otherwise, $op_i$ will be processed by $O$ and found valid (and $v_1$ will be the decided value in both processes) or invalid (and $v_2$ will be the decided value).
 \end{proof}
 
 \begin{algorithm}[t]
\setlength{\columnsep}{3em}
\captionsetup{name=Code}
\small
\begin{algorithmic}[1]
	\begin{multicols}{2}
\State Initialize object $O$ with prefix $S$
\State Init: $cons\_register_i$ and
$cons\_register_j$ are SWMR registers writable only by $i$ and $j$ respectively, initially $\bot$
\State Init: $oplist_i$ and $reslist_i$ are the vectors from Code~\ref{alg:applyIfValid}

        \State Code for process $i$:
        \Function{propose}{$v_1$}
            \State \wwrite($cons\_register_i, v_1$)
            \State $res \leftarrow \ApplyIfValid(op_i,i)$ 
            \If{$res = (\NACK, -)$}
                \State $v_2 \leftarrow$ \rread($cons\_register_j$)
                \State \textbf{decide}($v_2$)
            \Else ~
                \textbf{decide}($v_1$)
            \EndIf
        \EndFunction
        \vfill\null
        \columnbreak
    
        \State Code for process $j$:
        \Function{propose}{$v_2$}
            \State \wwrite($cons\_register_j, v_2$)
            \State $res \leftarrow \ApplyIfValid(op_j,j)$
            \If{$\exists c: op_i = oplist_i[c]$} \label{line:imp2:op_sub_check}
                \State wait until $reslist_i[c] \neq \bot$
                \State $opires \leftarrow$ \rread($reslist_i[c]$)
                \If{$opires = (\ACK, r)$}
                    \State $v_1 \leftarrow$ \rread($cons\_register_i$)
                    \State \textbf{decide}($v_1$)
                \EndIf
                \If{$opires = (\NACK,-) $}
                    \State \textbf{decide}($v_2$)
                \EndIf
            \Else ~ \label{line:imp2:op_sub_check_no}
                \textbf{decide}($v_2$) \label{line:imp2:j_decide_v2}
            \EndIf
        \EndFunction
\end{multicols}
\end{algorithmic}
\caption{Algorithm that solves consensus for processes $i$ and $j$ when $op_i$ does not invalidate $op_j$.}\vspace{-2em}
\label{alg:impossibility-2}
\end{algorithm}

\begin{definition}
\label{def:pe}
    Given a validated {\order} object together with its associated 
    $\valid()$ predicate and $\execute()$ function,
    we say that the object satisfies \emph{persistent execution} iff 
    \begin{enumerate}
        \item it satisfies persistent validity and
        \item \sloppy{for every run $R$, with order $\pprec$, every prefix $S$ of $S(R)$, and every pair of operations $op_i, op_j \notin S$ from processes $i \neq j$, if $\valid(S, op_i, i)=True \land \valid(S, op_j, j)=True$ then $\execute(S, op_i, i) = \execute(S || op_j, op_i, i).$}
    \end{enumerate}
\end{definition}


\begin{theorem}
\label{thm:consensus-to}
Let $O$ be a validated \order object without persistent validity, then $O$ can be used
to solve consensus in a crash-prone asynchronous system with $n \geq 3$ processes in which at most one process can crash.
\end{theorem}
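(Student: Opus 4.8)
The plan is to obtain the theorem directly from Observation~\ref{obs:non-pv-linearizable} together with Lemmas~\ref{lem:inval} and~\ref{lem:no-inval}, so that essentially all of the real work has already been carried out. First I would invoke Observation~\ref{obs:non-pv-linearizable}: since $O$ is a validated \order object without persistent validity, there exist a run $R$, a prefix $S$ of $S(R)$, distinct processes $i$ and $j$, and operations $op_i,op_j\notin S$ with $\valid(S,op_i,i)=True$, $\valid(S,op_j,j)=True$, and $\valid(S || op_j, op_i, i)=False$. These data are fixed once and for all and hard-coded into the reduction. As the initial configuration I would take a global state of a system running $O$ whose history of accepted operations is exactly $S$; such a configuration is reachable because it occurs along $R$, which is what justifies the ``Initialize object $O$ with prefix $S$'' step of Codes~\ref{alg:impossibility-1} and~\ref{alg:impossibility-2}.

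Next I would check what the model supplies. With $n\ge 3$ and at most one crash, at least $n-1\ge 2$ processes are correct and $n-1>n/2$, so a reliable shared memory of atomic SWMR registers can be emulated over the asynchronous message-passing system by standard constructions~\cite{DBLP:journals/jacm/AttiyaBD95,DBLP:journals/jpdc/ImbsRRS16}; the $n-2$ processes other than $i$ and $j$ may also help maintain the reliable black-box object $O$. This is exactly the infrastructure presupposed by Codes~\ref{alg:impossibility-1}, \ref{alg:applyIfValid}, \ref{alg:applyJournaled} and~\ref{alg:impossibility-2}: the registers $cons\_register_i$, $cons\_register_j$, $oplist_k$, $reslist_k$, and the guarantee that, $O$ being reliable, every logged call to $\apply$ eventually returns even if its caller has crashed in the meantime.

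The core of the argument is then a case split on whether $op_i$ also invalidates $op_j$. If $\valid(S || op_i, op_j, j)=False$, the two operations are mutually exclusive, Lemma~\ref{lem:inval} applies, and Code~\ref{alg:impossibility-1} lets $i$ and $j$ agree. If $\valid(S || op_i, op_j, j)=True$, then $op_i$ does not invalidate $op_j$, Lemma~\ref{lem:no-inval} applies, and Codes~\ref{alg:applyJournaled}, \ref{alg:applyIfValid} and~\ref{alg:impossibility-2} let $i$ and $j$ agree, with $j$ consulting $oplist_i$ and $reslist_i$ to learn the verdict on $op_i$. These two cases are exhaustive, and in each of them we obtain a protocol that, using only $O$ and the emulated registers, solves consensus between $i$ and $j$ while tolerating one crash (the remaining processes never affect the decision). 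That is the statement of the theorem; and since asynchronous two-process consensus tolerating one crash is unsolvable from registers alone, it also shows that $O$ is as strong as consensus.

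I do not expect a genuine obstacle at the level of the theorem: the delicate reasoning has already been discharged in the two lemmas, and above all in Lemma~\ref{lem:no-inval}, where, $O$ being a black box, process $j$ cannot observe the internal total order and must instead learn whether $op_i$ was found valid through the journaling layer of Codes~\ref{alg:applyIfValid}--\ref{alg:applyJournaled}, with separate care for the crash of $i$ and the crash of $j$. For the theorem proper the only things to get right are that the dichotomy on $\valid(S || op_i, op_j, j)$ is complete and that each branch meets the hypotheses of the lemma it invokes, and the immediate observation that one crash among $n\ge 3$ processes leaves a majority correct, so the reliable atomic SWMR registers assumed throughout Codes~\ref{alg:impossibility-1}--\ref{alg:impossibility-2} are indeed available.
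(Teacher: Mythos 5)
Your two-process core is sound and follows the paper's route exactly: fix the witnesses from Observation~\ref{obs:non-pv-linearizable}, note that $n\ge 3$ with one crash leaves a majority so the SWMR registers and the reliable object are available, and split on whether $\valid(S\,||\,op_i, op_j, j)$ is $False$ (Lemma~\ref{lem:inval}, Code~\ref{alg:impossibility-1}) or $True$ (Lemma~\ref{lem:no-inval}, Codes~\ref{alg:applyJournaled}, \ref{alg:applyIfValid} and~\ref{alg:impossibility-2}). Making that dichotomy explicit is actually cleaner than the paper's one-line appeal to the two lemmas.

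However, there is a genuine gap at the end: you assert that consensus between $i$ and $j$ alone, on one of $v_i,v_j$, ``is the statement of the theorem.'' It is not. The theorem claims consensus in a system of $n\ge 3$ processes, which requires that \emph{every} correct process decides and that the decided value is one proposed by \emph{some} of the $n$ processes --- not merely by $i$ or $j$. The paper's proof devotes most of its text precisely to this lifting: each process $k$ writes its proposal into a SWMR register $\mathit{prop}_k$; processes $i$ and $j$ wait until $n-1$ of these registers are filled (which must happen, since at most one process crashes) and each adopts one of the observed values as its input to the two-process protocol; after deciding, $i$ and $j$ write the outcome into SWMR registers $\mathit{decision}_i$, $\mathit{decision}_j$, and since at least one of $i,j$ is correct, every other correct process eventually reads the decision there and decides too. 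Without this step your reduction does not yield $n$-process consensus, only a two-party agreement that the remaining $n-2$ processes never learn and whose value need not be related to their proposals. (For the weaker ``$O$ is as strong as consensus'' moral your argument suffices, since two-process wait-free consensus is already unsolvable from registers; but the theorem as stated needs the lifting.)
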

\begin{proof}
From Lemmas~\ref{lem:inval} and \ref{lem:no-inval} we have that two processes $i$ and $j$ can solve consensus between them. To make the solution applicable to the $n$ processes, and allow any of the $n$ values proposed to be decided, we have each process writing its proposed value in a SWMR register $\mathit{prop}_k$ in the shared memory. Processes $i$ and $j$ wait until $n-1$ such registers are filled, and choose one value from these values. Then, they run the consensus algorithm between them, proposing the chosen values. As soon as one of the two processes decides, it writes the decision in the shared memory. They use SWMR registers $\mathit{decision}_i$ and $\mathit{decision}_j$. Since at least one process $i$ or $j$ is correct, the value decided is eventually written in at least one of these registers.
Then, the other processes can read it from there and also decide.
\end{proof}

Observe that in a crash-prone asynchronous system in which one process can crash consensus cannot be solved \cite{FLP85}. Thus, Theorem~\ref{thm:consensus-to} implies that
any validated \order object without persistent validity is as strong as a consensus object \cite{DBLP:books/daglib/0030596} in such a system.

\subsection{Consensus-free Total Order with Persistent Execution}
\label{sec:positivePV}
The previous result shows that persistent validity is required in order to be able to implement a validated {\order} 
object without consensus. Unfortunately this is not enough, as can be trivially observed from the fact that a $\valid()$ predicate that is always $True$ satisfies persistent validity. To be able to implement the object without consensus, some \emph{additional} condition must be imposed. The following is {an instance of} such a condition.

An object with persistent execution has significant flexibility for reordering concurrent operations to obtain different total orders $\pprec$ that satisfy the conditions of Definition~\ref{def:order-object}. Consider a validated \order object $O$ and a finite run $R$. Let $K$ be a set of concurrent operations issued by different processes $k$, such that $\forall op_k \in K$, it holds that $op_k \notin S(R)$,
$\nexists op \in S(R): op_k \rightarrow op$, and $\valid(S(R), op_k, k)=True$.

\begin{lemma}
If the validated \order object $O$ satisfies persistent execution, then $R$ 
can be extended with all the operations in $K$, in any order, satisfying Definition~\ref{def:order-object}. Moreover, $\forall op_k \in K$,
the value returned by $O.\apply(op_k,k)$ is $(\ACK,\execute(S(R), op_k, k))$.
\end{lemma}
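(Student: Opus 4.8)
The plan is to fix an arbitrary linear order $op_{k_1},op_{k_2},\dots,op_{k_m}$ of the operations in $K$ and build the extended run $R'$ by appending, after all events of $R$, non-overlapping invocation/response pairs for $op_{k_1},\dots,op_{k_m}$ in that order; since the chosen order is arbitrary, showing that $R'$ is a legal run of the validated \order object proves the ``in any order'' part. The candidate total order $\pprec'$ for $R'$ is the one whose restriction to $C(R)$ is $\pprec$, which puts every operation of $K$ after every operation of $C(R)$, and which orders the operations of $K$ as $op_{k_1}\pprec' op_{k_2}\pprec'\cdots\pprec' op_{k_m}$. First I would check Condition~1 of Definition~\ref{def:order-object}: for $op,op'\in C(R)$ it is inherited from $R$, because appending events after a finite run does not disturb the real-time order of existing events and $\pprec'$ restricts to $\pprec$ on $C(R)$; for $op\in C(R)$ and $op'\in K$ it holds by construction; the case $op\in K$, $op'\in C(R)$ is vacuous, since every operation of $K$ is invoked after $R$ terminates and hence cannot precede any operation of $C(R)$ (consistent with the hypothesis $\nexists op\in S(R)\colon op_k\rightarrow op$); and among operations of $K$ the sequential schedule makes $\rightarrow$ agree with $\pprec'$.

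The core of the argument is Condition~2 of Definition~\ref{def:order-object} for the new operations, together with the claimed return value. Writing $S_l := S(R)\,||\,op_{k_1}\,||\,\cdots\,||\,op_{k_l}$, we have $\tup{P(op_{k_j}),\pprec'}=S_{j-1}$, so for each $op_{k_j}\in K$ I must show $\valid(S_{j-1},op_{k_j},k_j)=True$ and that the value $\execute(S_{j-1},op_{k_j},k_j)$ required by the definition equals $\execute(S(R),op_{k_j},k_j)$. I would prove by induction on $l=0,1,\dots,m-1$ the joint claim: for every $op_k\in K$ whose issuer $k$ is not among $k_1,\dots,k_l$, we have (a) $\valid(S_l,op_k,k)=True$ and (b) $\execute(S_l,op_k,k)=\execute(S(R),op_k,k)$. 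The base case $l=0$ is exactly the hypotheses defining $K$. For the step, $S_l$ is a prefix of $S(R')$, and $op_k,op_{k_{l+1}}$ are distinct operations with distinct issuers, neither in $S_l$; since the induction hypothesis gives both $\valid(S_l,op_k,k)=True$ and $\valid(S_l,op_{k_{l+1}},k_{l+1})=True$, persistent validity (Definition~\ref{def:pv}) applied at prefix $S_l$ forces $\valid(S_l\,||\,op_{k_{l+1}},op_k,k)=True$, which is (a) for $l+1$; and clause~(2) of persistent execution (Definition~\ref{def:pe}) applied at $S_l$ with the same two validity facts gives $\execute(S_l,op_k,k)=\execute(S_l\,||\,op_{k_{l+1}},op_k,k)$, which combined with (b) for $l$ yields (b) for $l+1$.

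Instantiating the claim with $l=j-1$ and $op_k=op_{k_j}$ gives $\valid(S_{j-1},op_{k_j},k_j)=True$, so in $R'$ the call $O.\apply(op_{k_j},k_j)$ returns $(\ACK,\cdot)$, and it returns $\execute(S_{j-1},op_{k_j},k_j)$, which by part~(b) equals $\execute(S(R),op_{k_j},k_j)$; this simultaneously discharges Condition~2 of Definition~\ref{def:order-object} for $op_{k_j}$ and yields the value asserted by the lemma. For the pre-existing operations $op\in C(R)$ nothing changes: $P(op)$ and the restriction of $\pprec'$ to it coincide with those in $R$, and their response events are untouched, so Condition~2 carries over from the fact that $R$ already satisfied Definition~\ref{def:order-object}. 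Hence $R'$ with order $\pprec'$ satisfies Definition~\ref{def:order-object}, and the return-value claim holds.

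I would flag the bookkeeping of the induction as the main obstacle: one must ensure that persistent validity and persistent execution are always invoked on genuine prefixes of $S(R')$ and on pairs of distinct operations issued by distinct processes that are both absent from that prefix, and that the chain of $\execute$ equalities telescopes correctly back to $\execute(S(R),op_k,k)$. A secondary subtlety is pinning down what ``extending a run'' means at the history level, in particular arguing that appending events after a finite run introduces no new real-time constraints other than those among the operations of $K$, which the chosen schedule satisfies by construction.
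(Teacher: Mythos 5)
Your proof is correct and follows essentially the same route as the paper's: Condition~(1) of Definition~\ref{def:order-object} holds by construction since the operations in $K$ do not precede those in $S(R)$ and are mutually concurrent, Condition~(2) follows from persistent validity, and the return value from clause~(2) of persistent execution. The only difference is that you make explicit the induction on the number of $K$-operations already appended (iterating persistent validity and persistent execution one insertion at a time), which the paper's three-sentence proof leaves implicit; this is a welcome tightening rather than a different argument.
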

\begin{proof}
Any extension $R'$ as described will respect property (1) of Definition~\ref{def:order-object}, because the operations in $K$ do not precede in real time order those in $S(R)$ and they are concurrent among themselves.
Regarding property (2), from the assumption that $\forall op_k \in K, \valid(S(R), op_k, k)=True$, that the operations in $K$ are issued by different processes, and persistent validity, we have that all operations in $K$ will be valid in the extension of $R$.
Finally, the value returned for $op_k$ will be $(\ACK,\execute(S(R), op_k, k))$ from
property (2) of Definition~\ref{def:pe}.\vspace{-1em}
\end{proof}

From this lemma, we can derive that Code~\ref{alg:generic-apply} implements a validated \order object $O$ when persistent execution is satisfied. 
The total order $\pprec$ of a run of $O$ has to be an extension of the
order from Def.~\ref{def:prec}, imposing an order among those operations that are not ordered by $\prec$. One possibility is to order complete operations in a run by the real time order of their response events in the history of the run. This total order is consistent with $\prec$ because Def.~\ref{def:prec} and Code~\ref{alg:generic-apply} guarantee that (1) if $op \prec op'$ then $op$ completes before $op'$ and that (2) if $op$ completes before $op'$ and $op \not\prec op'$ then
$op$ and $op'$ are concurrent.
Hence the following result, which implies that consensus is not required to
implement validated \order objects with persistent execution.

\begin{theorem}
It is possible to implement a validated \order object $O$ that satisfies persistent validity and persistent execution in an asynchronous system with $n$ crash-prone processes from which up to $f < n/2$ can fail.\vspace{-.5em}
\end{theorem}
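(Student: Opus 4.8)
The plan is to take Code~\ref{alg:generic-apply} — the algorithm already shown to implement a validated \regular object — and reuse it \emph{verbatim}, now instantiated with the $\valid()$ and $\execute()$ of the given validated \order object, and then argue that under persistent validity and persistent execution this instantiation actually meets the stronger Definition~\ref{def:order-object}. The algorithm's only building blocks are reliable linearizable SWMR DLOs, and, as established in Section~\ref{sec:implementing-regular-objects} (following \cite{DBLP:journals/jpdc/ImbsRRS16}), such objects are implementable in an asynchronous system with $n$ crash-prone processes of which up to $f<n/2$ may fail. Hence the resilience bound in the statement is immediate once the correctness claim is in place, and the whole argument reduces to exhibiting, for every run $R$, a total order $\pprec$ on $C(R)$ satisfying the two clauses of Definition~\ref{def:order-object}.

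First I would define $\pprec$ on $C(R)$ by the real-time order of the operations' response events, which (by the invocation/response convention adopted for Code~\ref{alg:generic-apply}) are the completion events of the $L_i.append(\tup{\timestamp,op})$ operations; these are distinct points in real time, so $\pprec$ is a strict total order. The next step is to check that $\pprec$ extends the partial order $\prec$ of Definition~\ref{def:prec}: if $op\prec op'$ then, arguing exactly as in Lemma~\ref{lemma:implied-order} from the linearizability of the ledgers, the append of $op$ completes before that of $op'$, so $op\pprec op'$; and if $op$ completes before $op'$ but $op\not\prec op'$, then $op$ and $op'$ are concurrent in real time, since $op'\to op$ would put $op'$'s completion before $op$'s, while $op\to op'$ would give $op\prec op'$ by Lemma~\ref{lemma:implied-order}. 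Clause~(1) of Definition~\ref{def:order-object} then follows at once: $op\to op'$ forces $op$ to complete before $op'$, hence $op\pprec op'$.

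The substance is clause~(2). Fix a complete $op$ issued by $i$ and let $Q=\{op'\in C(R):op'\prec op\}$, which by the analysis behind Lemma~\ref{lemma:complete-validity} is precisely the set of operations whose records $\apply(op,i)$ read from the ledgers, so that call found $op$ valid against $\tup{Q,\prec}$ and returned $\execute(\tup{Q,\prec},op,i)$. Since $\prec\subseteq\pprec$ we have $Q\subseteq P(op):=\{op'\in C(R):op'\pprec op\}$, and the operations of $P(op)\setminus Q$ are exactly those concurrent with $op$ under $\prec$ but tie-broken before it, each issued by a process other than $i$ (well-formedness makes $i$'s own operations $\prec$-comparable to $op$). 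I would then proceed by induction along $\pprec$: assuming clause~(2) for every operation $\pprec op$, one shows that passing from the $\prec$-certificate the code produced for $Q$ to the $\pprec$-certificate required for $P(op)$ changes neither the validity verdict nor the returned value, by repeatedly invoking persistent validity (Definition~\ref{def:pv}) — which keeps $op$ valid as the concurrent operations from other processes are added — and persistent execution (Definition~\ref{def:pe}) — which keeps $\execute(\cdot,op,i)$ unchanged. This is the content of the lemma preceding this theorem, applied step by step along the run, and it yields $\valid(\tup{P(op),\pprec},op,i)=True$ and that $op$ returns $\execute(\tup{P(op),\pprec},op,i)$, establishing clause~(2) and hence the theorem.

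I expect the main obstacle to be precisely this last bridging argument in clause~(2): Code~\ref{alg:generic-apply} only ever tests $op$ against the $\prec$-downset $Q$ it observes in the ledgers and hands $\valid()/\execute()$ the partially ordered set $\tup{Q,\prec}$, whereas Definition~\ref{def:order-object} demands the same verdict and value against the full $\pprec$-prefix $P(op)$ presented as a sequence. Making the $\pprec$-induction close therefore requires threading persistent validity and persistent execution through \emph{every} intermediate prefix — in effect showing that the verdict and return value are insensitive both to the linearization of a reachable prefix and to the insertion of valid operations issued by other processes — and the mild signature mismatch (an \order object's primitives expect a sequence, not a partial order) is absorbed by the same two properties. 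Everything else — the linearizability bookkeeping for $\pprec$, clause~(1), and the $f<n/2$ accounting — is routine given Lemmas~\ref{lemma:implied-order} and~\ref{lemma:complete-validity} and the DLO construction of \cite{DBLP:journals/jpdc/ImbsRRS16}.
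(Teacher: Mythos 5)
Your proposal is correct and follows essentially the same route as the paper: reuse Code~\ref{alg:generic-apply}, take $\pprec$ to be the real-time response-order extension of the $\prec$ of Definition~\ref{def:prec}, and invoke persistent validity and persistent execution (via the lemma on extending a run with a set of concurrent valid operations) to carry the $\prec$-certificate computed by the code over to the $\pprec$-prefix required by Definition~\ref{def:order-object}, with the $f<n/2$ bound inherited from the DLO construction. Your explicit induction along $\pprec$ in clause~(2) merely spells out what the paper compresses into ``from this lemma, we can derive,'' so there is no substantive difference in approach.
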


\section{Applications of Validated Objects}
\label{sec:applications}
To demonstrate the usefulness of validated 
objects, in this section we present a 
number of possible applications providing 
the exact properties that each application 
satisfies. For each application, we present both a {\em relaxed} version, i.e, one that uses regular validated objects, and a {\em strict} version, i.e, one that uses totally-ordered validated objects,
and we analyze what
validity properties are required for the applications being realized. (One more application is given in Appendix~\ref{sec:Versioned}.)\vspace{-.5em}

\subsection{Punching System}
A punching system is an object that can be used by a process to log its activity. It essentially allows a process to signal the start of an activity and then signal that activity's end. 
One practical such system is used for tracking 
employee arrival and departure in various organisations. Such object may have the following two operations:\vspace{-.5em}
\begin{itemize}[leftmargin=5mm]
    \item $\act{punch-in}(t, i)$, that can only be invoked by process $i$, to mark his arrival at time $t$,
    \item $\act{punch-out}(i)$, that can only be invoked by process $i$, to mark his departure and return the hours worked since he last punched-in
\end{itemize}
Notice that the $\act{punch-in}(t,i)$ operation 
for $i$ is only valid if the last operation 
from $i$ was a $\act{punch-out}(i)$ operation and vice-versa.

\begin{wrapfigure}{R}{0.55\textwidth}
    \begin{minipage}{0.55\textwidth}
    \vspace{-37pt}
\begin{algorithm}[H]
\captionsetup{name=Code}
\caption{Functions $\valid()$ and $\execute()$ to implement a punching system object.}
\label{alg:punch}
\begin{algorithmic}[1]
\Function{$\valid$}{$\tup{P,\prec}, op, i$}
   \If{$(i$ is not the issuer of $op)$}
       \State \textbf{return}($False$)
    \EndIf
    \State $lop_i \leftarrow \{op': op'$ the last operation of $i$ in $P\}$
    \If{$(op=\act{punch-out}(k))$}
            \State \textbf{return} ($i = k \land 
            lop_i = \{\act{punch-in}(t,i)\} $)
    \Else \Comment{$op= \act{punch-in}(t, k)$}
            \State \textbf{return} ($i = k \land op = \act{punch-in}(t, k) \land$ \hfill \break
            \hspace*{7em}$( lop_i=\emptyset \lor lop_i = \{\act{punch-out}(i)\} ) $)
    \EndIf
\EndFunction\vspace{-1em}

\Statex

\Function{$\execute$}{$\tup{P,\prec}, op, i$}
    \If{$(op=\act{punch-out}(i))$}
        \State $lt_i \leftarrow \{t: op'=\act{punch-in}(t,i) \land$ \hfill \break
        \hspace*{6em} $\nexists op'' \in P$  s.t. $op'\prec op'' \}$
        \State \textbf{return} ($\mathit{hours}(\mathit{now}()-lt_i)$)
    \Else ~
        \textbf{return} ($\bot$)
    \EndIf
\EndFunction
\end{algorithmic}
\end{algorithm}
\end{minipage}\vspace{-1.2em}
\end{wrapfigure}

This object has both the persistent validity and persistent execution properties, as whenever 
$i$ recorded a $\act{punch-in}$ operation the $\act{punch-out}$ operation remains valid no matter of the operations executed by any other
process $j$. Persistent execution also holds 
since the value of the object at $i$ remains the same until $i$ performs its  $\act{punch-in}$ or
$\act{punch-out}$ operations. 

Notice that since the process $i$ is 
restricted to obtain its own working 
hours (i.e., invoke only $\act{punch-out}(i)$) then by well-formedness the relaxed version of the punching system is equivalent with the strict version. Thus, the system may be 
implemented without consensus utilizing 
the functions defined in Code~\ref{alg:punch}.
Recall in this code that $P$ is the set of complete operations that precede $op$ using the order $\prec$. Note also that
$\prec$ orders all the operations from the same process (from well-formedness and property (1) of Definitions~\ref{def:regular-object} and \ref{def:order-object}), so $lop_i$ is well defined.

\begin{theorem}
Code~\ref{alg:punch} combined with Code~\ref{alg:generic-apply} implements a strict punching system that satisfies both 
persistent validity and persistent execution.  
\end{theorem}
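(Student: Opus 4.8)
The plan is to verify directly that the functions $\valid()$ and $\execute()$ of Code~\ref{alg:punch} satisfy \emph{persistent validity} (Definition~\ref{def:pv}) and \emph{persistent execution} (Definition~\ref{def:pe}), and then to invoke the fact established at the end of Section~\ref{sec:positivePV} — that Code~\ref{alg:generic-apply} implements a validated \order object whenever persistent execution holds — together with the observation that, for this object, the relaxed and strict versions coincide. A single structural remark underpins everything: since $\act{punch-in}(t,i)$ and $\act{punch-out}(i)$ can be issued only by $i$, by well-formedness any two operations of the same process are ordered in real-time order, hence (property~(1) of Definitions~\ref{def:regular-object} and~\ref{def:order-object}) by $\prec$ (resp. $\pprec$). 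Thus in any prefix the operations of a fixed process $i$ are totally ordered, so $lop_i$, the last operation of $i$ in the prefix, is well defined (empty or a single operation), and likewise $i$'s most recent $\act{punch-in}$ in the prefix is well defined.

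For persistent validity, fix a run $R$, a prefix $S$ of $S(R)$, an operation $op_i \notin S$ with $\valid(S, op_i, i)=True$, and any $op_j \notin S$ issued by $j \neq i$ with $\valid(S, op_j, j)=True$. Inspecting Code~\ref{alg:punch}, the return value of $\valid(\tup{P,\prec}, op_i, i)$ depends on $P$ only through $lop_i$; the remaining tests concern the syntactic form of $op_i$ and the identity of its issuer, neither of which involves $P$. Since $op_j$ is issued by $j \neq i$ (and all operations are "owned" by their issuer), appending $op_j$ to $S$ leaves the last operation of $i$ in the prefix unchanged, so $lop_i$ is identical for $S$ and $S || op_j$. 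Hence $\valid(S || op_j, op_i, i)=\valid(S, op_i, i)=True$, which is exactly what persistent validity requires (there is no $op_j$ invalidating $op_i$).

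For persistent execution, condition~(1) of Definition~\ref{def:pe} was just shown, so it remains to check condition~(2): for $op_i, op_j \notin S$ from $i \neq j$ with both valid after $S$, $\execute(S, op_i, i)=\execute(S || op_j, op_i, i)$. If $op_i$ is a $\act{punch-in}$, both sides equal $\bot$. If $op_i=\act{punch-out}(i)$, then (since $op_i$ is valid, $lop_i$ is a $\act{punch-in}(t,i)$) the value returned is $\mathit{hours}(\mathit{now}()-lt_i)$, where $lt_i$ is the timestamp of $i$'s most recent $\act{punch-in}$ recorded in the prefix and $\mathit{now}()$ is the real-time instant at which $op_i$ is executed in the run. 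Appending $op_j$, an operation of $j \neq i$, changes neither $i$'s most recent $\act{punch-in}$ nor the instant at which $op_i$ runs, so $lt_i$ and $\mathit{now}()$, and hence the returned value, are the same for $S$ and $S || op_j$. This gives persistent execution.

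Finally, the relaxed and strict versions coincide here: the only operations are $\act{punch-in}(t,i)$ and $\act{punch-out}(i)$ issued by $i$, so the partial order $\prec$ of Definition~\ref{def:prec} already totally orders each process's own operations, which is all that $\valid()$ and $\execute()$ inspect; hence a run of Code~\ref{alg:generic-apply} with the functions of Code~\ref{alg:punch} can be extended to a total order $\pprec$ satisfying Definition~\ref{def:order-object} (this is the consequence of the lemma of Section~\ref{sec:positivePV}, which applies because persistent execution holds). Combining this with the two preceding paragraphs yields the theorem. I expect the only delicate point to be the bookkeeping: one must make precise that the quantities $\valid()$ and $\execute()$ read off the prefix ($lop_i$, $lt_i$) depend solely on $i$'s own sub-history, and that $\mathit{now}()$ in Definition~\ref{def:pe} refers to the single real-time instant at which $op_i$ executes — not to the hypothetical reordering — so that the two sides of condition~(2) differ only in the content of the prefix, which, as shown, does not affect $i$'s most recent operation.
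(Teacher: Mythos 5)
Your proposal is correct and follows essentially the same route as the paper, which states this theorem without a formal proof and relies on the surrounding informal argument: persistent validity and persistent execution hold because $\valid()$ and $\execute()$ read only process $i$'s own sub-history ($lop_i$ and its last $\act{punch-in}$), which operations of $j \neq i$ cannot alter, and the relaxed and strict versions therefore coincide so that Code~\ref{alg:generic-apply} suffices via the result of Section~\ref{sec:positivePV}. Your write-up is in fact more careful than the paper's (e.g., pinning down that $\mathit{now}()$ is unaffected by the hypothetical reordering and that $lop_i$ is well defined), but it is the same argument.
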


\subsection{Cryptocurrency}

\sloppy{In this section we implement a cryptocurrency (asset transfer)~\cite{DBLP:conf/podc/GuerraouiKMPS19}.} For that, a validated 
object is created, which holds an account for each process in $[1,n]$. For simplicity we assume that each process has initially a balance of $\ibalance$ tokens. The object has only two operations as described in~\cite{DBLP:conf/podc/GuerraouiKMPS19}:\vspace{-.5em}
\begin{itemize}[leftmargin=5mm]
    \item $\transfer(i, k, x)$, that can only be invoked by process $i$, transfers $x>0$ tokens from the account of the issuing process $i$ to the account of process $k$, and
    \item $\balancex(k)$, which returns an estimate of the current balance of 
    the account of process $k$. 
\end{itemize}
We assume that the operations are cryptographically signed by the issuer. As usual, it is not allowed that a process ever has negative balance. Hence, an operation $\transfer(i, k, x)$ 
is valid and can be executed only when the
balance of $i$ 
is higher than the amount $x$ to be transferred. In~\cite{DBLP:conf/podc/GuerraouiKMPS19} this validation is embedded of the operation execution, while here validation and execution are separated in different functions $\valid()$ and $\execute()$ (see Code~\ref{alg:cryptocurrency}).

\begin{wrapfigure}{R}{0.6\textwidth}
    \begin{minipage}{0.6\textwidth}
    \vspace{-25pt}
\begin{algorithm}[H]
\captionsetup{name=Code}
\caption{Functions $\valid()$ and $\execute()$ to implement a cryptocurrency.}
\label{alg:cryptocurrency}
\begin{algorithmic}[1]
\Function{$\valid$}{$\tup{P,\prec}, op, i$}
    \If{$(i$ is not the issuer of $op) \lor$ \hfill \break
    \hspace*{2.5em} $($signature of $op$ is invalid$)$}
       \textbf{return} ($False$)
    \EndIf
    \If{$op=\balancex(k)$}
        \textbf{return}($True$)
    \Else \Comment{$op= \transfer(j, k, x)$}
        \If{$(op= \transfer(j, k, x) \land j\neq i) \lor (x \leq 0)$}
            \State \textbf{return} ($False$)
        \EndIf
        \State $b_{in} \leftarrow \ibalance + \sum \{x': \exists j, \transfer(j,i, x') \in P  \}$
        \State $b_{out} \leftarrow \sum \{x': \exists j, \transfer(i,j,x') \in P \}$
        \State\textbf{return} ($b_{in} - b_{out} \geq x$)
    \EndIf
\EndFunction\vspace{.2em}
\Function{$\execute$}{$\tup{P,\prec}, op, i$}
    \If{$op=\balancex(k)$}
        \State $b_{in} \leftarrow \ibalance + \sum \{x': \exists j, \transfer(j,k,x') \in P  \}$
        \State $b_{out} \leftarrow \sum \{x': \exists j, \transfer(k,j,x') \in P \}$
        \State \textbf{return} ($b_{in} - b_{out}$)
    \Else ~ \textbf{return} ($\bot$) \Comment{$op= \transfer(i,k,x)$}
    \EndIf
\EndFunction
\end{algorithmic}
\end{algorithm}
\end{minipage}\vspace{-1.2em}
\end{wrapfigure}

We can also get this object in two flavors. 
In the relaxed version of the object the value returned by the $\balancex(k)$ operation must include all operations that precede the $\balancex(k)$ in real time ordering, but may not include some of the $\transfer()$ operations that are concurrent with the call. Thus, the operation may return 
a lower bound of the actual balance (including the concurrent operations). On the other hand, 
in the strict version (i.e., if we use a validated totally-order object) 
then the balance operations will return the 
exact amount of the balance. 
The same applies to $\transfer()$ operations. In the relaxed version some of them may be found invalid because incoming funds in concurrent transfers are not accounted for.
%

In order to implement the relaxed version of this object, it is enough to use the functions $\valid()$ and $\execute()$ as defined in Code~\ref{alg:cryptocurrency}, and use them in Code~\ref{alg:generic-apply}. Observe that 
the cryptocurrency object satisfies the property
of persistent validity but it does not satisfy 
the property of a persistent execution.
Therefore, in order to implement the strict version of this object, 
one may combine the functions of Code~\ref{alg:cryptocurrency}
with  Code~\ref{alg:bab-apply}, which uses 
the Atomic Broadcast service.\vspace{-.5em} 

\begin{theorem}
Code~\ref{alg:cryptocurrency} combined with Code~\ref{alg:generic-apply} or with Code~\ref{alg:bab-apply}, implement 
the relaxed and strict cryptocurrrency (asset transfer), respectively.\vspace{-.5em} 
\end{theorem}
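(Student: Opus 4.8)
The plan is to handle the two claims separately, in each case reducing to an already-established generic implementation result and then verifying that the functions of Code~\ref{alg:cryptocurrency} faithfully encode the asset-transfer semantics at the relevant consistency level. For the relaxed cryptocurrency I would instantiate the theorem asserting that Code~\ref{alg:generic-apply} implements a validated \regular object, taking $\valid()$ and $\execute()$ to be exactly those of Code~\ref{alg:cryptocurrency}; this immediately gives a validated \regular object. For the strict cryptocurrency I would instantiate analogously the theorem asserting that Code~\ref{alg:bab-apply} implements a validated \order object. What remains is therefore (i) to check that these specific $\valid()/\execute()$ describe a correct asset-transfer service, and (ii) to identify precisely which guarantees follow from Definition~\ref{def:regular-object} versus Definition~\ref{def:order-object}.

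\textbf{Verifying the functions.} First I would check that $\valid(\tup{P,\prec},op,i)$ returns $\mathit{True}$ on exactly the operations a cryptocurrency should accept: $\balancex(k)$ is always accepted for its issuer with a valid signature, whereas $\transfer(j,k,x)$ is accepted only if $j=i$ (only the owner spends its own funds), the signature is valid, $x>0$, and $b_{in}-b_{out}\ge x$, where $b_{in}$ is $\ibalance$ plus the sum of all transfers into $i$ recorded in $P$, and $b_{out}$ is the sum of all transfers out of $i$ recorded in $P$. The structural fact I would lean on is that, by well-formedness together with property~(1) of Definition~\ref{def:regular-object} (resp.\ Definition~\ref{def:order-object}), every earlier operation of the issuing process $i$ precedes $op$ in $\prec$ (resp.\ $\pprec$) and hence lies in $P(op)$ (resp.\ in the prefix $S$ before $op$); moreover $\prec$ linearly orders all operations of a single process, so $b_{out}$ accounts for \emph{every} earlier outgoing transfer of $i$. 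Then a short induction along the order shows that every account stays non-negative and that the total supply stays $n\cdot\ibalance$: only $i$ can decrease $i$'s balance, and $\valid()$ refuses a transfer unless $b_{in}-b_{out}\ge x$. For $\execute()$, $\balancex(k)$ returns $b_{in}-b_{out}$ evaluated over the operations ordered before the read, which is the net balance of $k$ according to that prefix, and $\transfer$ returns $\bot$, matching the specified outputs.

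\textbf{Relaxed versus strict.} In the \regular case, Definition~\ref{def:regular-object} supplies only a partial order $\prec$ consistent with real time, so $P(\balancex(k))$ contains all transfers that completed before the read but may omit some concurrent ones; omitting a concurrent incoming transfer only decreases $b_{in}$, so the value returned is a lower bound on the balance that would also include the concurrent operations, and a concurrent $\transfer(i,k,x)$ may be rejected because concurrent incoming funds are not yet counted---precisely the behavior described for the relaxed object (and, crucially, concurrency never causes an \emph{illegitimate} acceptance, since $b_{out}$ is never an underestimate). In the \order case, Definition~\ref{def:order-object} supplies a total order $\pprec$, so $P(op)$ is a prefix of $S(R)$, $\balancex(k)$ returns the exact balance of $k$ with respect to $S(R)$, and a transfer is accepted iff the funds suffice given the whole preceding history. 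Combining items (i) and (ii) with the two generic implementation theorems yields the statement; note that no appeal to persistent validity or persistent execution is needed, and in fact the strict version must use Code~\ref{alg:bab-apply} precisely because this object fails persistent execution, which would be required for the consensus-free Code~\ref{alg:generic-apply} to realize the \order variant.

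\textbf{Main obstacle.} The delicate step is the non-negativity argument: one must argue rigorously that $b_{out}$, evaluated over $P(op)$ (resp.\ over the prefix $S$), never underestimates the amount $i$ has already sent, so that a validated transfer can never overdraw and, inductively, no validated incoming transfer counted in $b_{in}$ is ``phantom'' money. This rests on simultaneously using well-formedness, property~(1) of the relevant definition, the linearity of $\prec$ on each process's own operations, and the restriction that $\transfer(i,\cdot,\cdot)$ is issued only by $i$; packaging this as a clean induction along the order---and, for the relaxed case, explaining why concurrency can only trigger rejections or lower-bound reads but never a spurious acceptance---is the heart of the proof.
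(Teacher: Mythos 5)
The paper states this theorem without any formal proof, offering only the surrounding prose (that it suffices to plug the $\valid()$/$\execute()$ functions of Code~\ref{alg:cryptocurrency} into Code~\ref{alg:generic-apply} or Code~\ref{alg:bab-apply}, and that the object has persistent validity but not persistent execution); your argument follows exactly this intended route, instantiating the two generic implementation theorems and then checking the asset-transfer semantics. Your added details — that $b_{out}$ is always exact because property~(1) plus well-formedness puts all of $i$'s own prior transfers in $P(op)$, so concurrency can only under-report $b_{in}$ and never cause an overdraw — are correct and fill in precisely what the paper leaves implicit.
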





\remove{
\subsection{Voting System without Consensus}

Consider now a voting system in which all participants can cast a yes/no vote at any time, only one per participant. The coordinator has an operation to query the object about whether a decision has been reached. This system has persistent validity but not persistent execution. It can be implemented as a validated \regular object. \af{Can be linearizable?}

With PV: Operations that start voting and vote. Voting several times is not allowed. The voting ends when everyone voted.
} 

\subsection{Do-All: Task Execution}

Do-All is an object in which a set of processes execute tasks taken from a set of jobs~\cite{DoAll1,DoAll2}. Any process can take
any task from the job set. In respect to the number of jobs that processes should execute, the specification of the Do-All can be traced to (1) a strict validated totally-ordered object, Definition~\ref{def:order-object}, if a specific number $T$ of job's executions must be respected, or (2) to a relaxed validated regular object, Definition~\ref{def:regular-object}, if executions of jobs beyond the threshold $T$ can be tolerated when some conditions are met, e.g., if they were initiated in a batch of concurrent operations.


Specifically, the Do-All object supports the following operations:\vspace{-.7em}
\begin{itemize}[leftmargin=10mm]
\item $Do (t,i)$: process $i$ claims and performs task $t$.
\item $Completed(t, i)$: process $i$ reports the completion of task $t$.
\end{itemize}

$Do()$ is not valid if a certain number of processes performed the task (say 3 for redundancy). Notice that, as mentioned,  
we can have the strict version (i.e., totally-order version) of the object, in which \emph{exactly} 3 processes can do a task, and the relaxed version (regular version) in which {\em at least 3} do it. 
Observe that this object does not satisfy the persistent validity property, nor the persistent execution one.
Code \ref{alg:doall} shows an implementation for the $\valid$ and $\execute$ predicates to realize the Do-All object in both cases.
The following result holds.

\begin{theorem}
Code~\ref{alg:doall} combined with Code~\ref{alg:generic-apply} or with Code~\ref{alg:bab-apply}, implement 
the relaxed and strict Do-All object, respectively. \vspace{-1em}
\end{theorem}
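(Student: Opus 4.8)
The plan is to reduce the statement to the two generic implementation theorems already established. Recall that the (unnamed) theorem of Section~\ref{sec:implementing-regular-objects} shows that Code~\ref{alg:generic-apply} implements a validated \regular object (Definition~\ref{def:regular-object}) for \emph{any} pair of functions $\valid(),\execute()$, and that the theorem of Section~\ref{sec:consensus} shows that Code~\ref{alg:bab-apply} implements a validated \order object (Definition~\ref{def:order-object}) for any such pair. Instantiating both with the functions of Code~\ref{alg:doall} therefore immediately yields a validated \regular object $O_r$ and a validated \order object $O_s$. What remains is to verify that $O_r$ meets the relaxed Do-All specification and $O_s$ meets the strict one; this is a matter of inspecting what the \valid() and \execute() of Code~\ref{alg:doall} compute on the arguments guaranteed by Definitions~\ref{def:regular-object} and~\ref{def:order-object}.

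For the strict object $O_s$, fix a run $R$ with its total order $\pprec$. By property~(2) of Definition~\ref{def:order-object}, every complete $Do(t,i)\in C(R)$ satisfies $\valid(\tup{P,\pprec}, Do(t,i), i)=True$, where $P$ is exactly the set of complete operations $\pprec$-ordered before $Do(t,i)$. Inspecting the \valid() predicate of Code~\ref{alg:doall}, this forces $P$ to contain successful $Do(t,\cdot)$ operations from at most two processes distinct from $i$, and, using that $\pprec$ orders each process's own operations (from property~(1) and well-formedness), that $i$ has not already performed $t$. Hence for every task $t$ at most three distinct processes have a complete $Do$ operation on $t$ --- exactly three under the standing assumption that at least three processes claim $t$. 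Finally, $Completed(t,i)$ is valid only when a matching $Do(t,i)$ precedes it in $P$, and $\execute()$ returns the value prescribed by the specification; so $O_s$ realizes the strict Do-All.

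For the relaxed object $O_r$, use instead the partial order $\prec$ of Definition~\ref{def:prec}. By property~(1) of Definition~\ref{def:regular-object}, $op\rightarrow op'\implies op\prec op'$, so for each complete $Do(t,i)$ the set $P$ handed to \valid() contains every operation that precedes $Do(t,i)$ in real time, though it may omit operations concurrent with it. Consequently the \valid() check rejects any $Do(t,i)$ that is real-time-preceded by three successful $Do(t,\cdot)$ from distinct processes, which gives the ``at least $3$'' redundancy guarantee, while a batch of $Do(t,\cdot)$ operations that are mutually concurrent may each pass its own \valid() check --- precisely the intended relaxation in which more than three executions of a task are tolerated. The arguments for $Completed(t,i)$ and for the return values of $\execute()$ are identical to the strict case.

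The only genuine point of care --- and the main obstacle --- is the bookkeeping inside Code~\ref{alg:doall}: one must check that its \valid() predicate counts exactly the successful $Do(t,\cdot)$ operations of distinct processes appearing in its first argument (and imposes no further constraint), and that its \execute() returns the specification-mandated value for $Completed(t,i)$ and $\bot$ for $Do$. Once this inspection is done, the difference between the two object flavors is entirely explained by whether the set handed to \valid() is totally ordered ($\pprec$, giving ``exactly $3$'') or only partially ordered ($\prec$, giving ``at least $3$''), and no further argument is needed.
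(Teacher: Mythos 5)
The paper states this theorem without any proof, and the intended justification is precisely the reduction you give: instantiate the generic correctness theorems for Code~\ref{alg:generic-apply} and Code~\ref{alg:bab-apply} with the \valid{} and \execute{} functions of Code~\ref{alg:doall}, and then read off the relaxed (``at least $T$'') versus strict (``exactly $T$'') semantics from whether the first argument handed to \valid{} is ordered by the partial order $\prec$ (which may omit concurrent $Do$ operations) or by the total order $\pprec$ (which contains all $\pprec$-predecessors). Your proposal is correct and supplies exactly this omitted argument.
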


\begin{algorithm}[t]
\captionsetup{name=Code}
\begin{multicols}{2}
\begin{algorithmic}[1]
{\Function{$\valid$}{$\tup{P,\prec}, op, i$}
   \If{$(i$ is not the issuer of $op)$}
       \State \textbf{return}($False$)
    \EndIf
    \If{$(op=\act{completed}(x, k))$}
        \State \textbf{return}($i = k$)
    \Else \Comment{$op= \act{do}(x, k)$}
        \If{$(op= \act{do}(x, k) \land i = k)$}
            \State $c \leftarrow |\{j: \act{do}(x,j) \in P  \}|$
            \State \textbf{return}($c \leq T$)
        \Else ~
            \textbf{return}($False$)
        \EndIf
    \EndIf
\EndFunction

\Statex

\Function{$\execute$}{$\tup{P,\prec}, op, i$}
    \If{$(op=\act{completed}(x, k))$}
        \State \textbf{return}($(\act{do}(x, k) \in P)$)
    \Else ~
        \textbf{return}($\bot$)
    \EndIf
\EndFunction
}
\end{algorithmic}
\end{multicols}
\caption{Functions $\valid()$ and $\execute()$ to implement a Do-All object given a threshold $T$ taken from a set $J$ of jobs to execute.}\label{alg:doall}
\end{algorithm}


\section{Conclusions}
In this paper we have formalized the notion of a validated object, decoupling the object operations and properties from the validation procedure. We have focused on two type of objects, satisfying different levels of consistency: the validated \order object, offering a total ordering of its operations, and its weaker variant, the validated \regular object. For both types, we have provided crash-tolerant implementations. Note that these implementations only attempt to prove that it is possible to implement different types of validated objects with and without consensus. Our objective was not to make them as efficient as possible; this is the subject of future work.

For validated \order objects, we further considered the persistent validity and persistent execution properties and their impact on the object's implementation. Our investigation has shown that $(i)$ in the absence of persistent validity, the object is as strong as consensus, and $(ii)$ persistent validity is not enough to implement a validated \order object without consensus; persistent execution was needed. An interesting future direction is to investigate whether there exists a weaker property than persistent execution, that together with persistent validity would yield consensus-free implementations of validated \order objects. 

Furthermore, this investigation could be extended for Byzantine failures. We believe that with certain adjustments, a Byzantine-tolerant implementation of validated \regular objects can be obtained from the one presented in Section~\ref{sec:implementing-regular-objects}. Observe that the consensus-based implementation of validated \order objects presented in Section~\ref{sec:consensus} can tolerate Byzantine failures, when a Byzantine-tolerant Atomic Broadcast service is used. Also, the negative result of Section~\ref{sec:negative} trivially applies to Byzantine failures. What remains to be investigated are the conditions under which it is possible to obtain Byzantine-tolerant consensus-free implementations of validated \order objects. Finally, other consistency levels for validated objects can be defined, beyond \regular and \order, and their implementability in different distributed system models be explored.

\bibliographystyle{plainurl}
\bibliography{sample,biblio}
\newpage
\begin{appendix}
\section{Proof of Lemma \ref{lemma:complete-validity}}
\label{sec:app:proofs}
\remove{
\subsection{Lemma \ref{lemma:strict-order}}
{\bf Statement.} If $op \prec op'$ it cannot happen that $op' \prec op$. Hence, $\prec$ is a strict order.
\begin{proof}
Assume for contradiction that $op \prec op'$ and $op' \prec op$. 
Let $\timestamp(op)=(i,T_1, \ldots, T_n)$ and $\timestamp(op')=(k,T'_1, \ldots, T'_n)$. 
Then $\apply(op,i)$ finds $T_i$ records in ledger $L_i$ and $T_k$ records in ledger $L_k$, while $\apply(op',k)$ finds $T'_i$ records in $L_i$ and $T'_k$ records in ledger $L_k$. By assumption we have that $T_i < T'_i$ and $T_k > T'_k$. Assume wlog $i \leq k$. From $T_i < T'_i$, the append operation in $\apply(op,i)$ was completed before the $L_i.get()$ operation in
$\apply(op',k)$. Hence, $i$ executed $L_k.get()$ before $k$, and by the linearizability of $L_k$, it is not possible that $T_k > T'_k$, and we have a contradiction.
\end{proof}

{\bf Statement.} $op \rightarrow op' \implies op \prec op'$.
\begin{proof}
Let us assume $op$ was issued by process $i$ and $op'$ was issued by process $k$. Let $\timestamp(op)=(i, T_1, \ldots, T_i, \ldots, T_n)$. From $op \rightarrow op'$, the response action of $op$ happened before the invocation action of $op'$. So, the execution of the append operation $L_i.append(\tup{\timestamp,op})$ in the call $\apply(op,i)$ was completed before the $L_i.get()$ call in  $\apply(op',k)$.
Then, because of the linearizability of the ledgers, the length of ledger $L_i$ found in $\apply(op',k)$ is $T'_i \geq T_i+1$ (since the append operation increased its length). 
Hence, $op \prec op'$ from Definition~\ref{def:prec}.
\end{proof}
}

{\bf Statement.} \sloppy{For each complete operation $op$ (issued by $i$), $\valid(\tup{P(op), \prec}, op, i)=True$. Moreover, $op$ returns in its response event the value $\execute(\tup{P(op), \prec}, op, i)$.}
\begin{proof}
The claim follows if we show that the set $P$ created in Line~\ref{p-def} of Code~\ref{alg:generic-apply} is the same as $P(op)$. 
Recall that $\timestamp(op)=(i, T_1, \ldots, T_k, \ldots, T_n)$.
Observe that in a ledger $L_k$ the timestamp $\timestamp=(k, T'_1, \ldots, T'_k, \ldots, T'_n)$ in the $j$th record in the ledger has
$T'_k=j-1$. Then, it holds that $P \subseteq P(op)$, because for each $k$, for each record $\langle \timestamp', op'\rangle \in G_k$, the timestamp $\timestamp'=(k, T'_1, \ldots, T'_k, \ldots, T'_n)$ satisfies that
$T'_k < T_k = |G_k|$.

Let us assume there is an operation $op' \in P(op)$ (hence, $op' \prec op$) and $op' \notin P$.
Assume 
$op'$ was issued by process $k$, and 
$\timestamp(op')=(k, T'_1, \ldots, T'_k, \ldots, T'_n)$.
Then, by linearizability of the ledgers and $op' \notin P$, $op'$ was appended in ledger $L_k$ after the $L_k.get()$ 
of $\apply(op,i)$ found $T_k$ records in the ledger.
Hence, $op'$ is the $j$th record in ledger $L_k$, where $j>T_k$. Note from Code~\ref{alg:generic-apply} that $T'_k=j-1$, since by well-formedness the $j$th operation of process $k$ finds $j-1$ records in ledger $L_k$.
Then, $T'_k \geq T_k$, and hence it cannot happen that $op' \prec op$.
\end{proof}

\remove{
\subsection{Lemma \ref{lem:no-inval}}
{\bf Statement.} Let $O$ be a validated \order object without persistent validity, and let 
prefix $S$, processes $i$ and $j$, and operations $op_i, op_j \notin S$ as in Observation~\ref{obs:non-pv-linearizable}. Moreover, assume that
$\valid(S || op_i,$ $op_j, j)=True$. Then Codes~\ref{alg:applyJournaled},  \ref{alg:applyIfValid}, and \ref{alg:impossibility-2} allow 
processes $i$ and $j$ to reach consensus. 
\begin{proof}
    \sloppy{Without crashes, both processes $i$ and $j$ start by writing their proposed values $v_i$ and $v_j$ in their respective $cons\_register_i$ and $cons\_register_j$.} Then, they call $\ApplyIfValid()$ with their operations $op_i$ and $op_j$. As in Code~\ref{alg:impossibility-2}, process $i$ waits for response and decides $v_i$ or $v_j$ depending on whether $op_i$ was found valid or not. This is determined from the value returned by the $\ApplyIfValid(op_i,i)$ call.
    
    On its hand, process $j$ always receives $\ACK$ from the $\ApplyIfValid(op_j,j)$ call,
    since operation $op_j$ is found valid by hypothesis. So, it can not use this to know whether $op_i$ precedes $op_j$ and was hence found valid. Instead,
    it first checks if process $i$ submitted $op_i$ via a $\ApplyIfValid(op_i,i)$ call by searching in the $oplist_i$ vector. If $op_i$ was not submitted, then process $j$ can safely decide $v_2$ (line~\ref{line:imp2:op_sub_check_no}), because if it is
    submitted now it will be found invalid. Note that process $i$ will decide $v_2$ as well.
    
    If $op_i$ is found in $oplist_i$ (line~\ref{line:imp2:op_sub_check}), then process $j$ needs to wait for the result of $\ApplyIfValid(op_i,i)$ by reading from register $reslist_i$. As mentioned, because of the reliability of the object and the shared memory, the result will eventually be written there. At this point, if the result of $\ApplyIfValid(op_i,i)$ is $\ACK$ then it means that $op_i$ was ordered before $op_j$, and the value to be decided is $v_i$. If it is $\NACK$ then $op_j$ has been ordered
    before $op_i$, $op_i$ was invalid, and the value to be decided is $v_j$. In either case, the decided value is consistent with the one decided by process $i$, solving the consensus between the two processes.
    
    The correctness for the case when process $j$ crashes is as in the proof of Lemma~\ref{lem:inval}. If process $i$ crashes before writing $op_i$ in $oplist_i$, then $j$ decide $v_j$ as described above. Otherwise, $op_i$ will be processed by $O$ and found valid (and $v_1$ will be the decided value in both processes) or invalid (and $v_2$ will be the decided value).
\end{proof}
}

\section{Versioned Read/Write Objects}
\label{sec:Versioned}
A versioned object is a read/write object with the difference that each value written is associated 
with a version from a totally-ordered set of versions. A write operation succeeds only if it attempts to 
write a value with a version higher than any of the versions used by previous write operations; otherwise the write operation fails. 
In particular the object was introduced in \cite{Coverability16},
and supports two operations:
\begin{itemize}[leftmargin=5mm]
    \item $\act{write}(\tup{ver, v},x)$: process $i$ attempts to write value $v$ with version $ver$ on object $x$.
    \item $\act{read}(x)$: process $i$ attempts to
    read the latest value and version of the object $x$.
\end{itemize}

In the strict case only the writes that satisfy the total ordering may be executed and thus this will ensure a strict order on the version of the writes. Therefore, we will obtain a single consistent sequence of versions. On the other hand on the relaxed case, multiple writes promoting the same version may conflict allowing multiple writes to be executed. In such a case only some of those will succeed by the operation definition, thus ensuring the properties of  Coverability as presented in~\cite{Coverability16}.

\begin{algorithm}[h]
\caption{Functions $\valid()$ and $\execute()$ to implement a R/W versioned object.}
\label{alg:rwobject}
\begin{algorithmic}[1]
\Function{$\valid$}{$\tup{P,\prec}, op, i$}
   \If{$(i$ is not the issuer of $op)$}
       \State \textbf{return}($False$)
    \EndIf
    \If{$(op=\act{read}(x))$}
        \State \textbf{return}($True$)
    \Else \Comment{$op= \act{write}(\tup{ver,v}, x)$}
        \If{$op= \act{write}(\tup{ver,v}, x)$} 
            \State $ver_{max} \leftarrow \max{\{ver: \act{write}(\tup{ver,*},x) \in P  \}}$
            
            \State \textbf{return}($ver > ver_{max}$)
        \Else
            \State \textbf{return}($False$)
        \EndIf
    \EndIf
\EndFunction

\Statex

\Function{$\execute$}{$\tup{P,\prec}, op, i$}
    \If{$(op=\act{read}(x))$}
        \State $ver_{max}\leftarrow \max{\{ver: \act{write}(\tup{ver,*},x,j) \in P  \}}$
        \State $v_{max} \leftarrow \{v: \act{write}(\tup{ver_{max},v},x,j) \in P \}$
        \State \textbf{return}($\tup{ver_{max}, v_{max}}$)
    \Else \
        \State \textbf{return}($\bot$)
    \EndIf
\EndFunction
\end{algorithmic}
\end{algorithm}

A versioned object does not satisfy persistent validity, neither persistent execution. Thus, in order to implement the strict version of the object we use consensus. The following result holds.

\begin{theorem}
Code~\ref{alg:rwobject} combined with Code~\ref{alg:generic-apply} or with Code~\ref{alg:bab-apply}, implement 
the relax and strict versioned R/W object, respectively. 
\end{theorem}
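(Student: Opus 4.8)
The plan is to reduce the claim to the two generic implementation results already proved: Code~\ref{alg:generic-apply} implements a validated \regular object for any pair of functions $\valid(),\execute()$, and Code~\ref{alg:bab-apply} implements a validated \order object for any such pair. Instantiating both with the concrete $\valid()$ and $\execute()$ of Code~\ref{alg:rwobject} yields, respectively, a validated \regular object and a validated \order object whose executed operations and returned values are exactly those dictated by these two functions. It then remains to check that the first of these objects is the relaxed versioned R/W object and the second the strict versioned R/W object of~\cite{Coverability16}.

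First I would record the sequential behaviour encoded by Code~\ref{alg:rwobject}: a $\act{read}(x)$ is always $\valid$, and $\execute$ returns $\tup{ver_{max},v_{max}}$, the maximal version present in the supplied set $P$ together with its associated value(s); a $\act{write}(\tup{ver,v},x)$ issued by its own issuer is $\valid$ precisely when $ver$ strictly exceeds the maximal version of any write in $P$. Observe that neither function consults the order component of $\tup{P,\prec}$, only the underlying set $P$; this is what makes the same code correct both in the \regular and in the \order setting.

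For the strict version, I would invoke Definition~\ref{def:order-object}: in a run $R$ the set $C(R)$ is totally ordered by $\pprec$, and for every $op\in C(R)$ we have $\valid(\tup{P(op),\pprec},op,i)=True$ where $P(op)$ is the $\pprec$-prefix before $op$. Traversing $S(R)$, every executed write carries a version strictly larger than all preceding executed writes, so the executed writes induce a strictly increasing sequence of versions, and every read returns the value carrying the largest version written before it. This is exactly the strict versioned object: only writes respecting the total version order are executed, giving one consistent increasing sequence of versions, and reads return the latest value and version.

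For the relaxed version, I would argue from Definition~\ref{def:regular-object}: there is a partial order $\prec$ on $C(R)$ containing real-time order, and each executed write $op$ satisfied $\valid(\tup{P(op),\prec},op,i)=True$, i.e., its version strictly dominates every version in its $\prec$-past $P(op)$. Hence versions strictly increase along every $\prec$-chain, while two concurrent writes may legitimately promote the same (or $\prec$-incomparable) versions and still both be executed --- the conflicting-write behaviour that Coverability~\cite{Coverability16} permits. A read returns the maximal version in its $\prec$-past together with the associated value(s); since $\prec$ refines real-time order, it sees every write that precedes it in real time, so its answer is no older than any completed write, matching the relaxed lower-bound guarantee. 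The main obstacle is that \cite{Coverability16} is cited rather than restated here, so the delicate part is making the correspondence between ``validated \regular object with the $\valid()/\execute()$ of Code~\ref{alg:rwobject}'' and ``coverable R/W object'' fully precise --- in particular reconciling concurrent same-version writes and the set-valued $v_{max}$ returned by $\execute()$ with the coverability semantics; once the definitions are aligned, both directions are direct instantiations of the generic theorems.
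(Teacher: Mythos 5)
Your proposal is correct and follows the same route the paper intends: the paper states this theorem without a formal proof, relying only on the preceding informal discussion (strict case: the total order forces strictly increasing versions; relaxed case: concurrent writes may promote conflicting versions, matching Coverability), and your reduction to the two generic implementation theorems instantiated with the $\valid()$/$\execute()$ of Code~\ref{alg:rwobject} is exactly that argument made explicit. Your flagged concern about precisely aligning the validated-\regular semantics (including the set-valued $v_{max}$ under concurrent same-version writes) with the coverability definition of~\cite{Coverability16} is legitimate, but it is a gap the paper itself leaves open rather than a flaw in your approach.
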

\end{appendix}

\end{document}